\newcommand{\rr}{{\mathbb{R}}}
\newcommand{\mW}{{\mathcal{W}}}
\renewcommand{\algorithmicrequire}{{\bf Input:}}
\renewcommand{\algorithmicensure}{{\bf Output:}}
\theoremstyle{definition}
\newtheorem{theorem}{Theorem}
\newtheorem{lemma}[theorem]{Lemma}
\newtheorem{corollary}[theorem]{Corollary}
\theoremstyle{definition}
\newtheorem{example}{Example}
\newtheorem{assumption}{Assumption}
\newtheorem{property}[theorem]{Property}
\begin{document}

	\title{Approaching the Transient Stability Boundary of a Power System: Theory and Applications}
	
	\author{Peng Yang, 
		Feng Liu,~\IEEEmembership{Senior Member,~IEEE,}
		Wei Wei,~\IEEEmembership{Senior Member,~IEEE,} and
		Zhaojian Wang
		%	,~\IEEEmembership{ Member,~IEEE}
		%-	Steven~H.~Low,~\IEEEmembership{Fellow,~IEEE,}
		%-	Shengwei~Mei,~\IEEEmembership{Fellow,~IEEE,}		%Krishnamurthy~Dvijotham% <-this % stops a space
		%	\author{XX XXX,~\IEEEmembership{Member,~IEEE,}
		%		\thanks{This work was supported  by the National Natural Science Foundation
		%			of China ( No. 51321005, No. 51377092), Foundation of Chinese	Scholarship Council (CSC No. 201506215034) Los Alamos National Lab through an DoE grant DE-AC52-06NA25396, and Skoltech through Collaboration
		%			Agreement 1075-MRA.	}	% <-this % stops a space
		\thanks{P. Yang, F. Liu, and W. Wei are with the Department of Electrical Engineering, Tsinghua University, Beijing, China, 100084. Z. Wang is with the Department of Automation, Shanghai Jiao Tong University, Shanghai, China, 200240. Corresponding author: F. Liu (lfeng@tsinghua.edu.cn).}% <-this % stops a space
	}
	
	\maketitle
	
	\begin{abstract}
		Estimating the stability boundary is a fundamental and challenging problem in transient stability studies. It is known that a proper level set of a Lyapunov function or an energy function can provide an inner approximation of the stability boundary, and the estimation can be expanded by trajectory reversing methods. In this paper, we streamline the theoretical foundation of the expansion methodology, and generalize it by relaxing the request that the initial guess should be a subset of the stability region. We investigate topological characteristics of the expanded boundary, showing how an initial guess can approach the exact stability boundary locally or globally. We apply the theory to transient stability assessment, and propose expansion algorithms to improve the well-known Potential Energy Boundary Surface (PEBS) and Boundary of stability region based Controlling Unstable equilibrium point (BCU) methods. Case studies on the IEEE 39-bus system well verify our results and demonstrate that estimations of the stability boundary and the critical clearing time can be significantly improved with modest computational cost.   
	\end{abstract}
	
	\begin{IEEEkeywords}
		stability boundary, expansion, critical clearing time, power system transient stability.
	\end{IEEEkeywords}

	\IEEEpeerreviewmaketitle

	\section{ INTRODUCTION}
	\label{sec:1}
	%	\subsection{ Background}
	\IEEEPARstart{T}{ransient} stability analysis is of crucial importance for power systems' security. One of the most momentous issues in this field is estimating the stability region and its boundary for a post-fault equilibrium \cite{Kundur_Definitionclassificationpower_2004}. The challenge is exacerbated by the transition of power systems to highly complex nonlinear systems dominated by massive renewable and distributed energies \cite{7419922}. In this paper, we aim to alleviate this issue by proposing an expansion methodology that improves the estimations obtained from the direct methods.

	Roughly speaking, the direct methods estimate the stability boundary by a level set of a Lyapunov function or an energy function \cite{ref:Chiang:2010,8444083,huang2021neural,8791570}. Compared with other methods that relies on solving system trajectories \cite{Stott_Powersystemdynamic_1979}, the direct methods avoid computationally costly numerical integration, and hence are much faster and enable on-line applications \cite{ref:Gless:1966}. However, estimations resulting from this approach often suffer from being too conservative and hence harm economic benefits.
	
	%	In the direct methods, one key element that determines the estimation accuracy is the Lyapunov (energy) function. Constructing a Lyapunov (energy) function for a nonlinear system is generally difficult. Such functions are traditionally related to the physical energy of the dynamic system \cite{1099743, Moon_Developmentenergyfunction_1997}. Recently, the sum-of-square and semi-definite programming (SDP) techniques have been devolved to construct Lyapunov function for polynomial systems and to estimate the stability region \cite{Chesi_DomainAttractionAnalysis_2011,Topcu_RobustRegionofAttractionEstimation_2010}. However, the size of the corresponding SDP grows quickly with the system order, so is difficult to cope with large-scale power systems.  Sampling-based methods are also proposed to construct Lyapunov function \cite{7798585,7810344} and to estimate the stability region \cite{Bobiti_AutomatedSamplingBasedStabilityVerification_2018}, which are time-consuming and only applicable to low-dimensional systems so far.
	
	One key element that determines the estimation accuracy is the level value of the estimation level set, i.e., the critical level value.
	Depending on the selection of the critical level value, the direct methods can be roughly classified into two categories: the global methods and the local methods. %The former have been investigated for transient stability analysis in power systems for decades \cite{ref:Kaye:1982, ref:Wu:1983, ref:Pavella:1985, ref:IEEE:1988}. 
	The former involves a level set contained completely in the stability region and estimates the stability boundary in all directions, i.e., a global estimation. %Although they are capable of providing global approximation, such methods are quite conservative. 
	The largest possible level value in this kind is given by the value at the closest unstable equilibrium point (UEP) as shown by Chiang et al. \cite{ref:Chiang:1989}.  Nevertheless, since the closest UEP has minimal value on the entire stability boundary, such a global method is still too conservative for industrial practice. 
	
	The local methods only estimate a certain part of the stability boundary relevant to a specific fault. In this category, a larger critical level value is used than the global estimation. Therefore, only part of the level set is contained in the stability region that serves as a local estimation. Compared with the global methods, the local methods greatly reduce conservativeness when only local information of the stability boundary is needed. 
	
	Among all local methods, the PEBS method and the BCU method stood out. 
	The PEBS method was first proposed by Kakimoto et al. \cite{ref:PEBS1,ref:PEBS2}, and was later endowed with a solid theoretic foundation \cite{Chiang_Foundationspotentialenergy_1988}. The basic idea of the PEBS method is to regard the first local maximum of the potential energy along the fault-on trajectory as the critical level value. It avoids calculations of UEP and reduces the system to a gradient one, contributing to a fast and simple algorithm for practical implementation. Nevertheless, the result of PEBS method could be over-optimistic and the correctness is not guaranteed due to the model reduction \cite{Chiang_Foundationspotentialenergy_1988}.
	The BCU method was established by Chiang et al. \cite{Chiang_BCUmethoddirect_1994,Chiang_TheoreticalfoundationBCU_1995}, using the concept of controlling UEP (CUEP) to get a fault-depended critical level value. Due to its solid theoretic foundation and satisfactory performance in practice, the BCU method is widely recognized as the most effective direct method \cite{Alberto_Directmethodstransient_2001}. %However, as this method is only a local approximation in the neighborhood of CUEP, the accuracy is harmed if the exiting point of the fault-on trajectory is distant from the CUEP. 
	%\subsection{ Motivations and Contributions}
	
	With a fixed critical level value, the monotonic property of the trajectories has long been used to reduce the conservativeness, which to the best of our knowledge dates back to the early 1980's \cite{ref:Genesio:1985}. In \cite{ref:Genesio:1985}, Genesio et al. suggested a \textit{trajectory reversing method} (TRM), indicating that the estimated region can be constantly enlarged by reverse integration and eventually approaches the exact stability region. However, the TRM algorithm relies on numerical integration starting from numerous points around the stable equilibrium point. Thus, no analytical results can be obtained and it is extremely computationally intensive, inhibiting applications to high-dimension systems.
	%	In \cite{ref:Jin:2005} and \cite{ref:Jin:2010}, the backward reachable set of a SEP is calculated by solving a set of Hamilton-Jacobi-Issac partial differential equations numerically. Most recently, a decomposition technique was proposed to alleviate the computational complexity \cite{8267187} and the forward reachable sets are used to estimate the DOA \cite{ref:reachable_set}. The reachability analysis is also studied for index-1 differential-algebraic systems \cite{Althoff_ReachabilityAnalysisNonlinear_2014}. These reachability-analysis-based methods are theoretically equivalent to the TRM because the reachable set is essentially determined by and only by the integral trajectories. It, therefore, provides accurate estimation, however, is time-consuming. It is reported in \cite{ref:reachable_set} that it takes 382.71s to calculate the 2-dimensional DOA of a simple single-machine power system.
 Chiang et al. first extended the idea of TRM and proposed a symbolic-calculation-based constructive method to improve the estimation of stability boundary in \cite{Chiang_constructivemethoddirect_1988}, and further developed this inspiring idea in \cite{ref:Chiang:1989-2} and \cite{ref:Chiang:1990}.   
	Our previous work \cite{ref:Liu:2011,Liu_estimationstabilityboundaries_2011} interpreted the expansion methodology from a geometrical view, and allowed more general iterative algorithms. Similar expansion ideas have also been used to produce closed-form stability region estimations for fuzzy-polynomial systems \cite{Pitarch_ClosedFormEstimatesDomain_2014} and to compute Lyapunov functions \cite{Doban_ComputationLyapunovFunctions_2018}. 
	
	Despite fruitful results of descending from the TRM methodology, several critical issues remain unaddressed and restrict its application in power systems. First, it requires a strict subset of the stability region to act as the initial guess. However, as the exact stability region is unknown beforehand, it is hard to certify this condition \textit{a prior}. Second, it can only improve the global estimation of the stability boundary, and hence cannot be applied to the more effective and wildly-used local methods, e.g., the PEBS and BCU methods. These issues motivate us to generalize the theoretic fundamentals of the expansion method and develop efficient algorithms that enable power system applications.
	
	Inspired by \cite{ref:Genesio:1985, Chiang_constructivemethoddirect_1988,ref:Chiang:1989-2,ref:Chiang:1990,ref:Liu:2011,Liu_estimationstabilityboundaries_2011}, we further develop the expansion methodology with special focus on power system applications. The contributions of this paper are mainly two folds: 
	\begin{enumerate}
		\item We consolidate theories for the expansion methodology, both locally and globally. By revealing the topological characteristics of the expansion, we relax the restrictive assumption in the current literature that requires the initial level set should be strictly contained in the exact stability region \cite{ref:Genesio:1985, Chiang_constructivemethoddirect_1988,ref:Chiang:1989-2,ref:Chiang:1990,ref:Liu:2011,Liu_estimationstabilityboundaries_2011}. Hence, the expansion methodology can be applied to local estimations, which are more wildly-used in power system applications.
		\item  We propose effective algorithms to improve estimations of the stability boundary and the critical clearing time (CCT) for power system transient stability analysis, which extend pioneering schemes in \cite{Chiang_constructivemethoddirect_1988,ref:Chiang:1989-2,ref:Chiang:1990} to a general expansion scheme. Such algorithms can significantly improve the widely acknowledged PEBS and BCU methods with modest extra computational effort, enabling applications to complex power systems.
	\end{enumerate} 
	
	The remainder of this paper is organized as follows. Section II presents necessary notations and preliminaries; Section III shows our main theoretical results with several illustrative examples; Section IV designs the expansion algorithms; Section V reports the case studies on the IEEE 39-bus power system benchmark; Finally, Section VI concludes the paper. Most proofs are deferred to the Appendix.

	\section{ Notation and Preliminary }
	\label{sec:2}
	Consider an autonomous continuous-time nonlinear system
	\begin{equation}
	\label{eq:system.1} 
	\dot{x}=f(x) 
	\end{equation}
	where  $f:\rr^n\to\rr^n$ is the vector field and $x(t)\in \rr^n$ is the state. We assume $f$ is sufficiently smooth such that solutions are complete for all initial conditions.  
	
	The solution of \eqref{eq:system.1} starting from $x_0 \in \rr^n$, i.e., $x(0)=x_0$, are called the flow of \eqref{eq:system.1} with respect to $x_0$, which is denoted by $\phi _{t}^{f} (x_0)$. 
	For any given $t\in \rr$, $\phi_t^f (\cdot)$  defines a mapping $\rr^n\to\rr^n$, which is referred to as the \textit{flow mapping}. 
	%	\begin{property} \label{Prop:1}
	%		\cite{Kelley_TheoryDifferentialEquations_2010} The flow mapping of system \eqref{eq:system.1}, $\phi^f_t: \rr^n\to\rr^n$, has the following properties:
	%		\begin{enumerate}
	%			\item For any given $t\in \rr $, it is a diffeomorphic mapping. 	
	%			\item It defines an additive group with an ``addition" operation ``$\circ$'' on $\rr$, i.e, for any $x\in\rr^n$	and any $t, \tau\in \rr$, there are
	%			\begin{equation*}
	%			\begin{aligned}
	%			&\phi^f_t \circ \phi^f_{\tau} \circ x =\phi^f_{t+\tau} \circ x=\phi^f_{\tau} \circ \phi^f_t \circ x\\
	%			&\phi^f_t \circ   \phi^f_{-t} \circ x =\phi^f_0 \circ x=x		
	%			\end{aligned}
	%			\end{equation*}
	%		\end{enumerate}
	%	\end{property}
	
	A point $x_{e}\in\rr^n $ is called an equilibrium point (EP) of \eqref{eq:system.1} if $f(x_{e})= 0$. Let $E:=\left\{x\in \rr^n|\; f\left(x\right) = 0\right\}$ denote the set of all EPs. We say $x_{e}\in E$ is a hyperbolic equilibrium point if the Jacobian of \eqref{eq:system.1} at $x_e$, $J=\frac{\partial f}{\partial x}|_{x=x_{e} } $, has no zero-real-part eigenvalue.
	For a hyperbolic EP, it is an asymptotically Lyapunov stable equilibrium point, denoted by ASEP, if all the eigenvalues of the Jacobian at this EP have negative real parts. Otherwise, it is unstable and is denoted by UEP. If the Jacobian at $x_{e} $ has exactly $k$ eigenvalues with positive real parts, $x_{e}$ is called a type-$k$ UEP.
	Its stable and unstable manifolds $\mW^{s} \left(x_{e} \right)$, $\mW^{u} \left(x_{e} \right)$ are defined as
	\begin{eqnarray*}
		\mW^{s} (x_{e} ) &:=& \{ x \in \rr^n| \mathop {\lim }\limits_{t\to +\infty } \phi _{t}^{f} (x)=x_{e} \} \\
		\mW^{u} (x_{e} ) &:=& \{ x \in \rr^n|\mathop{\lim }\limits_{t\to -\infty } \phi _{t}^{f} (x)=x_{e} \}.
	\end{eqnarray*}
	%	Both the stable and unstable manifolds are closed \cite{Wiggins_IntroductionAppliedNonlinear_2003}.
	
	%	\begin{definition}
	Given an ASEP $x_s$ of \eqref{eq:system.1}, its stability region (or region of attraction), denoted by $A(x_s)$, is
	\begin{equation}\label{eq:As}
	A(x_{s} ):=\{ x \in  \rr^n|\mathop{\lim }\limits_{t\to +\infty } \phi _{t}^{f} (x)=x_{s} \}. 
	\end{equation}
	%	\end{definition}
	$A(x_s)$ is a connected open invariant set and is homeomorphic to $\rr^n$ \cite{ref:Zaborszky:1988}. We use $\partial A(x_s)$ to denote its boundary. For simplicity, we assume there exists no closed orbit of \eqref{eq:system.1} on $\partial A(x_s)$, which is commonly satisfied in power systems with energy functions \cite{ref:Chiang:1989}.
	
	We follow the line of \cite{ref:Chiang:1988,ref:Chiang:survey1995,ref:Chiang:2010} and assume that the system \eqref{eq:system.1} satisfies the following assumptions.

	\begin{assumption}
		\label{as:1}
		\cite{ref:Chiang:1988} The equilibrium of system \eqref{eq:system.1} satisfy: 
		\begin{enumerate}
			\item All EPs on the stability boundary are hyperbolic;
			\item  The stable and unstable manifolds of EPs on stability boundaries satisfy the transversality condition
			%			\footnote{Consider two manifolds $\mathcal{X}_1,\mathcal{X}_2\subset\rr^n$, we say $\mathcal{X}_1$ and $\mathcal{X}_2$ satisfy the transversality condition if (i) $\forall x\in\mathcal{X}_1\cap\mathcal{X}_2$, the tangent spaces of $\mathcal{X}_1$ and $\mathcal{X}_2$ at $x$ span $\rr^n$, i.e. $T_x(\mathcal{X}_1)+T_x(\mathcal{X}_2)=\rr^n$, or (ii) $\mathcal{X}_1\cap\mathcal{X}_2=\emptyset$ };
			\item Every trajectory on  stability boundaries approaches one EP as $t\to \infty$. 
		\end{enumerate} 
	\end{assumption}

	For systems satisfying Assumption \ref{as:1}, its stability boundary is composed of stable manifolds of all UEPs on the boundary, as stated in the following theorem.
	
	\begin{theorem} \label{thm:boundary.1}
		\cite{ref:Chiang:1988} (\textit{Characteristics of the stability boundary}): If Assumption \ref{as:1} is satisfied, then
		\begin{eqnarray}
		\label{eq:boundary.2}
		\partial A(x_{s} ) = \bigcup _{x_u \in E\bigcap \partial A(x_{s} )} \mW^{s} (x_u ) 
		\end{eqnarray}
	\end{theorem}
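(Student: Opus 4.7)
The plan is to prove the two set inclusions separately, both resting on the structural fact that $\partial A(x_s)$ is a closed, invariant set under the flow $\phi_t^f$. I would first establish this as a preliminary lemma: $A(x_s)$ is open and fully invariant by definition (if $\phi_t^f(x)\to x_s$, then so does $\phi_{t+\tau}^f(x)$ for every $\tau\in\rr$), so its complement $\rr^n\setminus A(x_s)$ is closed and invariant; meanwhile $\overline{A(x_s)}$ is invariant since the flow is continuous (limits of orbits in $A(x_s)$ map under $\phi_t^f$ to limits of orbits in $\overline{A(x_s)}$). Hence $\partial A(x_s) = \overline{A(x_s)} \cap (\rr^n\setminus A(x_s))$ is closed and invariant.

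For the inclusion $\bigcup_{x_u\in E\cap\partial A(x_s)}\mW^s(x_u)\subseteq\partial A(x_s)$, I would fix $x_u\in E\cap\partial A(x_s)$ and $x\in\mW^s(x_u)$, so $\phi_t^f(x)\to x_u$ as $t\to+\infty$. The point $x$ cannot lie in $A(x_s)$, because then the flow would converge to $x_s$, contradicting $x_u\neq x_s$ (note $x_s\in A(x_s)$ while $x_u\in\partial A(x_s)$). And $x$ cannot lie in the open invariant set $\rr^n\setminus\overline{A(x_s)}$, because that set is forward invariant and would trap the orbit away from the boundary, contradicting $\phi_t^f(x)\to x_u\in\overline{A(x_s)}$. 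Therefore $x\in\partial A(x_s)$.

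For the reverse inclusion $\partial A(x_s)\subseteq\bigcup_{x_u\in E\cap\partial A(x_s)}\mW^s(x_u)$, I would pick an arbitrary $x\in\partial A(x_s)$. By invariance, the whole forward trajectory $\{\phi_t^f(x):t\geq 0\}$ lies in $\partial A(x_s)$. Assumption~\ref{as:1}(3) then guarantees this trajectory converges to some equilibrium $x_e\in E$, and closedness of $\partial A(x_s)$ places $x_e\in E\cap\partial A(x_s)$. By the definition of the stable manifold, $x\in\mW^s(x_e)$, completing the inclusion.

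The main obstacle is Assumption~\ref{as:1}(3): a priori, a bounded orbit on the boundary could have a non-trivial $\omega$-limit set---a periodic orbit or a more exotic invariant set---so one genuinely needs the assumption (combined with hyperbolicity of boundary EPs and the no-closed-orbit hypothesis stated just before Assumption~\ref{as:1}) to conclude that the limit set is a single equilibrium. One small extra point worth mentioning: nothing in the argument forces $x_e$ to be an UEP rather than the ASEP $x_s$, but this is automatic because $x_s$ lives in the open set $A(x_s)$ and therefore cannot sit on the boundary, so the equilibrium obtained above is necessarily an UEP appearing in the union on the right-hand side of \eqref{eq:boundary.2}.
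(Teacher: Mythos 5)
The paper does not actually prove Theorem \ref{thm:boundary.1}; it quotes it from Chiang et al.\ \cite{ref:Chiang:1988}, so your proposal has to be judged against the original argument. Your proof of the direction $\partial A(x_s)\subseteq\bigcup_{x_u\in E\cap\partial A(x_s)}\mW^s(x_u)$ is correct and is indeed the easy half: $\partial A(x_s)$ is closed and invariant, Assumption \ref{as:1}(3) forces every boundary orbit to converge to an equilibrium, closedness puts that equilibrium on the boundary, and it cannot be $x_s$ since $x_s$ is interior to $A(x_s)$.

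The gap is in the other inclusion, $\mW^s(x_u)\subseteq\partial A(x_s)$ for $x_u\in E\cap\partial A(x_s)$, which is precisely the hard content of the theorem. You argue that $x\in\mW^s(x_u)$ cannot lie in the invariant open set $\rr^n\setminus\overline{A(x_s)}$ because that set would ``trap the orbit away from the boundary.'' Invariance only keeps the orbit \emph{inside} that open set for all finite time; its limit point may perfectly well lie on the boundary of that set, which meets $\partial A(x_s)$. So there is no contradiction with $\phi^f_t(x)\to x_u\in\partial A(x_s)$: an orbit can converge to a boundary equilibrium while staying strictly outside $\overline{A(x_s)}$ forever. This is not a removable technicality --- when the transversality condition, Assumption \ref{as:1}(2), is dropped, Chiang et al.\ exhibit systems in which the union of stable manifolds of boundary equilibria is strictly larger than $\partial A(x_s)$, i.e.\ exactly this inclusion fails; the fact that your argument never uses hyperbolicity or transversality is the telltale sign that a purely topological trapping argument cannot work. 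The standard proof first establishes, by an induction exploiting Assumptions \ref{as:1}(2)--(3), that $\mW^u(x_u)\cap A(x_s)\neq\emptyset$ for every equilibrium $x_u$ on the boundary, and then applies the inclination ($\lambda$-) lemma: a small disk transverse to $\mW^s(x_u)$ at a point $p\in\mW^s(x_u)$, pushed forward by the flow, accumulates on $\mW^u(x_u)$ and hence intersects the open invariant set $A(x_s)$, so $p\in\overline{A(x_s)}$; since also $p\notin A(x_s)$ (its orbit tends to $x_u\neq x_s$), one concludes $p\in\partial A(x_s)$. Your proposal needs this (or an equivalent) argument to close the inclusion in \eqref{eq:boundary.2}.
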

	
	We refer interested readers to \cite{fisher2020hausdorff,fisher2021comments} for a more recent and complete discuss on conditions for Theorem \ref{thm:boundary.1}.
	
	Although \eqref{eq:boundary.2} characterizes the topological property of the stability boundary, the analytical expression for $\partial A(x_s)$ is unavailable for general power systems. Alternatively, it is able to estimate $\partial A(x_s)$ by proper level sets of Lyapunov functions or energy functions of the system \cite{ref:Chiang:survey1995,Khalil_NonlinearSystems_2002}.
	
	\begin{assumption}\label{as:2}
		There exists a function $V:\rr^n\to\rr$ that is a Lyapunov function, defined as in \cite[Theorem 4.1]{Khalil_NonlinearSystems_2002}, or is an energy function, defined as in \cite{ref:Chiang:survey1995}.
	\end{assumption}
	
	Given some positive number $l>0$, define
	\begin{equation}
	\label{eq:Sl.1}
	S_{l}:=\left\lbrace  x\in \rr^{n}|\;V(x)<l\right\rbrace.  
	\end{equation}
	Then $S_{l}$ is an open set and is referred to as the \textit{sublevel set} with \textit{level value} $l$. Let $\partial S_{l}$ denote the boundary of $S_{l}$, i.e., $\partial S_{l}=\{ x\in \rr^{n}|\;V(x)=l\}$.
	We call $\partial S_{l}$ as the \textit{level set} of $V(x)$ with \textit{level value} $l$. Under mild conditions, the level set $\partial S_{l}$ is an $(n-1)$-dimension manifold in $\rr^n$ \cite{ref:Chiang:survey1995}.
	
	By the definitions of Lyapunov function and energy function, the minimum of $V(x)$ in $A(x_s)$ is always obtained at $x_s$, which is denoted by $l_{\min}:=V(x_s)$. Generally, there exists a maximal level value, $l_{\max } $, such that $S_l$ remains in $A(x_s)$. That is, for all $l$ satisfying $l_{\min } < l < l_{\max } $, it holds that\footnote{Generally, $S_{l}$ might contain several disjoint connected components, however, there is at most one connected component has nonempty intersection with $A(x_{s})$ \cite{ref:Chiang:survey1995,ref:Chiang:2010}. For the simplicity of denotation, we refer $S_{l}$ in particular to that component, rather than introducing a new symbol.}
	\begin{eqnarray} 
	\label{eq:Sl.2}
	\left\{\begin{array}{l} {S_{l}\cap A(x_{s} ) \ne \emptyset} \\ {\partial S_{l}\cap \partial A(x_{s} )=\emptyset, } \end{array}\right.
	\end{eqnarray} 
	and $\partial S_{l_{\max }}\cap \partial A(x_{s})\neq\emptyset$. It has been shown that $l_{\max}$ is obtained at a type-1 UEP on $\partial A(x_{s})$, which is referred to as the \textit{closest} UEP, denoted by $x_u^{cl}$ \cite{ref:Chiang:1989}. Hence,
	letting $l=l_{\max}=V(x_u^{cl})$, we obtain the maximum global estimation associated with the Lyapunov (energy) function. 
	
	Although $S_{l_{\max}}$ is optimal in the sense of global estimation, it is still too conservative for practical applications. Hence, local estimation methods emerged that use $\partial S_l$ with $l>l_{\max}$ to approximate a local segment of $\partial A(x_s)$. In this case, $\partial S_{l}$ is no longer a subset of $A(x_s)$. The two well-known local methods, i.e., the PEBS method and the BCU method, differ in how they choose $l$. The PEBS method uses the potential energy at an estimated fault-dependent exit point as the level value $l$ \cite{Chiang_Foundationspotentialenergy_1988}. While, in the BCU method, the fault-dependent UEP, referred to as the CUEP, is used to determine the level value $l$ \cite{Chiang_TheoreticalfoundationBCU_1995}. %Since the estimated boundary is related to a specific fault-trajectory, the conservativeness of estimation can be reduced significantly without causing over-optimism. However,  the estimation error could be large if the exit point of the critical fault trajectory is not close to the CUEP.
	
	The following notations will also be used in this paper. For a set $A\subset\rr^n$, $\overline A$ denotes the closure of $A$, $\text{Int}(A):=\overline A \backslash \partial \overline A$ denotes the internal point of $A$, and $A^{c}$ denotes the complement of $A$. For a point $x\in\rr^n$, we define the distance from $x$ to $A$ as 
	$d(x,A):=\inf_{y\in A}\|x-y\|_2$. For any $t\in\rr$, we define
	$\phi^f_{t}(A):=\left\lbrace y\in\rr^n|\;y=\phi^f_{t}(x),x\in A\right\rbrace $.

	\section{ Main Theoretical Results}
	\label{sec:3}
	%	\subsection{Basic Idea}
	In this section, we establish the theoretical foundation for the expansion methodology in both global and local senses. We show that the flow mapping is exactly an expansion operator to improve the estimation. We identify the relation among the exact stability boundary, the initial level set, and the expanded level sets via the flow mapping.
	
	%	Obviously, the stable and unstable manifolds, the stability region, the stability boundary, and $S_{l}$  all are invariant sets. Note that, when $l > l_{\min}$, the set $S_A:=S_{l}\cap A(x_{s})$ is also simply connected and open, which is homeomorphic to $\rr^n$. It implies $S_A$ and $A(x_s)$ are homeomorphic. This fact motivates that, if we can come up with the homeomorphic operation, we will be able to map $S_A$ to $A(x_s)$ homeomorphically. Consequently, the boundary of $S_A$ under such an operation can eventually approach the exact stability boundary $\partial A(x_s)$ with arbitrary accuracy. We claim the flow mapping of \eqref{eq:system.1}  exactly defines such a homeomorphic operation as expected. 
	
	\subsection{Global Properties}
	\label{subsec:result.global}
	We first discuss the properties of the flow mapping on an invariant subset of $A(x_s)$. In this case, the expansion results in a new inner approximation for the entire $\partial A(x_s)$, namely the global case. 
	
	\begin{lemma} \label{lem:expansion.1}
		(\textit{Expansion via the flow mapping}): Let $x_s$ be an ASEP of \eqref{eq:system.1}. Let $\Omega $ be a connected and positively invariant set that satisfies $x_{s} \in \text{Int}(\Omega) \subset A(x_{s})$. Then the following statements hold:
		\begin{enumerate}
			\item $\phi^f_{-t}(\Omega)$ is a connected and positively invariant set for all $t\in(0,+\infty)$;
			\item $\Omega \subset \phi _{-t}^{f} (\Omega )\subset A(x_{s} )$  for all $t\in (0,\,+\infty)$;
			\item $\phi _{-t_1}^{f} (\Omega )\subset \phi _{-t_2}^{f} (\Omega )$ if $0 < t_1 <t_2 <+\infty$;
			\item $\lim _{t\to +\infty } \phi _{-t}^{f} (\Omega )=A(x_{s} ).$% \peng{$\lim _{t\to +\infty } \phi _{-t}^{f} (\Omega )$ is well defined}
			%$\forall x\in A(x_s), \exists \tau>0$ such that $\forall t>\tau, x\in\phi _{-t}^{f} (\Omega )$. \peng{new version}
		\end{enumerate}
	\end{lemma}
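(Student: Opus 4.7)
The plan is to prove the four statements in order, since each relies on its predecessors. The core tools are: the flow map $\phi^f_t$ is a homeomorphism of $\rr^n$ with inverse $\phi^f_{-t}$ and satisfies the group law $\phi^f_s\circ\phi^f_t=\phi^f_{s+t}$; the positive invariance of $\Omega$ and of $A(x_s)$; the convergence $\phi^f_t(x)\to x_s$ for every $x\in A(x_s)$; and the hypothesis $x_s\in\text{Int}(\Omega)\subset A(x_s)$.

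For (1), connectedness of $\phi^f_{-t}(\Omega)$ is immediate from continuity of $\phi^f_{-t}$, and positive invariance follows by writing any $y=\phi^f_{-t}(x)$ with $x\in\Omega$, applying the group law to get $\phi^f_s(y)=\phi^f_{-t}(\phi^f_s(x))$ for $s\ge 0$, and using positive invariance of $\Omega$ to conclude $\phi^f_s(x)\in\Omega$, so $\phi^f_s(y)\in\phi^f_{-t}(\Omega)$.

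For (2), the inclusion $\Omega\subset\phi^f_{-t}(\Omega)$ comes from the identity $x=\phi^f_{-t}(\phi^f_t(x))$ with $\phi^f_t(x)\in\Omega$; for $\phi^f_{-t}(\Omega)\subset A(x_s)$, any $y=\phi^f_{-t}(x)$ has its forward trajectory enter $\Omega$ at time $t$ and stay there by positive invariance of $\Omega$, hence converges to $x_s$, placing $y$ in $A(x_s)$. Statement (3) then follows immediately by applying the bijection $\phi^f_{-t_1}$ to the inclusion $\Omega\subset\phi^f_{-(t_2-t_1)}(\Omega)$ supplied by (2), and invoking the group law.

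The substantive step is (4). Thanks to the monotonicity in (3), the limit can be read as the nested union $\bigcup_{t>0}\phi^f_{-t}(\Omega)$, whose containment in $A(x_s)$ is just (2). The reverse inclusion is where $x_s\in\text{Int}(\Omega)$ is genuinely used: pick a ball $B(x_s,\varepsilon)\subset\Omega$; for any $y\in A(x_s)$, the definition \eqref{eq:As} yields $T>0$ with $\phi^f_T(y)\in B(x_s,\varepsilon)\subset\Omega$, so $y\in\phi^f_{-T}(\Omega)$. I expect the main subtlety to lie in (4), both in pinning down the correct interpretation of the set-limit (the monotone-union interpretation, justified by (3), is the cleanest and sidesteps finer Hausdorff-convergence considerations) and in recognizing that the reverse inclusion cannot be proved without the interior hypothesis on $x_s$, since otherwise a trajectory in $A(x_s)$ could converge to the boundary of $\Omega$ without ever entering it.
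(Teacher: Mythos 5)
Your proposal is correct and follows essentially the same route as the paper's proof: diffeomorphism and group-law properties of the flow for (1)--(3), the monotone-union reading of the limit in (4), and the interior-neighborhood argument at $x_s$ combined with the definition of $A(x_s)$ for the reverse inclusion. If anything, your verification of positive invariance in (1) (checking $\phi^f_s(y)\in\phi^f_{-t}(\Omega)$ for all $s\ge 0$ via the group law) is spelled out more completely than in the paper.
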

	
	This lemma reveals that a connected and positively invariant subset of the stability region can be expanded via the (inverse-time) flow mapping $\phi _{-t}^{f}$. Particularly, any point in $A(x_s)$ can be reached as $t\to +\infty$. 
	
	The next lemma identifies the relationship between the exact stability boundary and the expanded boundaries of a positively invariant subset under the flow mapping. 
	
	\begin{lemma} \label{le:boundary.2}
		(\textit{Properties of the boundaries under the flow mapping}): Let $x_s$ be an ASEP of \eqref{eq:system.1}. Let $\Omega $ be a connected and positively invariant set that satisfies $x_{s} \in \text{Int}(\Omega)\subset A(x_{s})$. Then the following statements hold:
		\begin{enumerate}
			\item $\partial (\phi _{-t}^{f} (\Omega ))= \phi^f_{-t}(\partial \Omega)$  for any $t\in (0,\,+\infty)$.
			\item $\lim_{t\to +\infty }d(x,\phi _{-t}^{f} (\partial \Omega ))=0$ for all $x\in\partial A(x_s)$.
		\end{enumerate}
	\end{lemma}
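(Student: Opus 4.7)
My plan is to handle the two statements separately, since they rely on different machinery.

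For part (1), I would use the fact that for each fixed $t\in\rr$ the flow map $\phi_t^f:\rr^n\to\rr^n$ is a diffeomorphism with inverse $\phi_{-t}^f$: smoothness of $f$ together with completeness of solutions gives smoothness in both time directions, so $\phi_{-t}^f$ is in particular a homeomorphism of $\rr^n$ onto itself. Homeomorphisms commute with the topological operations of closure, interior, and boundary, which immediately yields $\partial(\phi_{-t}^f(\Omega))=\phi_{-t}^f(\partial\Omega)$. This step is essentially a one-line appeal to point-set topology.

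For part (2), my approach is a connectedness/intermediate-value argument. Fix $x\in\partial A(x_s)$ and $\epsilon>0$; I aim to produce, for every sufficiently large $t$, a point of $\phi_{-t}^f(\partial\Omega)$ lying in $B_\epsilon(x)$. Since $A(x_s)$ is open and $x\in\partial A(x_s)$, every neighborhood of $x$ meets both $A(x_s)$ and $A(x_s)^c$, so I pick $y_1\in B_\epsilon(x)\cap A(x_s)$ and $y_2\in B_\epsilon(x)\cap A(x_s)^c$. Because $\phi_t^f(y_1)\to x_s\in\text{Int}(\Omega)$, there exists $T>0$ with $\phi_t^f(y_1)\in\text{Int}(\Omega)$ for all $t\geq T$, where the positive invariance from Lemma \ref{lem:expansion.1} supplies the ``for all $t\geq T$'' part. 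On the other hand, since $A(x_s)$ is flow-invariant, so is $A(x_s)^c$, and hence $\phi_t^f(y_2)\in A(x_s)^c\subset\rr^n\setminus\overline{\Omega}$ for every $t\geq 0$.

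Next I would consider the straight line segment $\gamma(s)=(1-s)y_1+sy_2$ for $s\in[0,1]$, which lies in the convex ball $B_\epsilon(x)$. For each fixed $t\geq T$ the image $\phi_t^f(\gamma([0,1]))$ is connected, meets the open set $\text{Int}(\Omega)$ at $\phi_t^f(y_1)$, and meets the open set $\rr^n\setminus\overline{\Omega}$ at $\phi_t^f(y_2)$. A connected set cannot lie in the union of two disjoint open sets without being contained in one of them, so the image must also intersect $\partial\Omega$, say at $\phi_t^f(\gamma(s^*))$ for some $s^*\in(0,1)$. Applying $\phi_{-t}^f$ then gives $\gamma(s^*)\in\phi_{-t}^f(\partial\Omega)$ with $\|\gamma(s^*)-x\|<\epsilon$ by convexity of the ball, so $d(x,\phi_{-t}^f(\partial\Omega))<\epsilon$ for all $t\geq T$. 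Letting $\epsilon\to 0$ completes the proof.

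The main technical subtlety I anticipate is ensuring that $\phi_t^f(y_2)$ stays outside $\overline{\Omega}$ rather than merely outside $\Omega$, which is exactly what the connectedness step demands. This reduces to the inclusion $\overline{\Omega}\subset A(x_s)$, which is natural in the setting of the paper because $\Omega$ is taken to be a closed sublevel set of a Lyapunov (or energy) function, giving $\overline{\Omega}=\Omega\subset A(x_s)$ via Lemma \ref{lem:expansion.1}; in fuller generality one can simply replace $\Omega$ by its closure, which remains connected, positively invariant, and contains $x_s$ in its interior. Beyond this minor point, every step is routine topology combined with flow invariance of $A(x_s)$.
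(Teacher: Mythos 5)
Your part (1) is the same content as the paper's: the paper proves by a two-case contradiction argument exactly the fact you invoke, that a homeomorphism of $\rr^n$ (here $\phi_{-t}^{f}$) maps boundaries to boundaries, so your one-line appeal is a cleaner packaging of an identical idea. For part (2) you take a genuinely different route. The paper combines statement (1) with Lemma \ref{lem:expansion.1}(4): it picks $x_0\in A(x_s)$ with $d(x,x_0)<\varepsilon$, waits until $x_0\in\phi_{-t}^{f}(\Omega)$, and then uses the (implicit) fact that the distance from the exterior point $x$ to $\partial\phi_{-t}^{f}(\Omega)$ is at most $d(x,x_0)$. You instead work in forward time with two points $y_1\in A(x_s)$, $y_2\in A(x_s)^c$ straddling $x$, flow the segment between them forward, and use connectedness to find a crossing of $\partial\Omega$, then pull back. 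In effect you inline the proof of Lemma \ref{lem:expansion.1}(4) and make the paper's implicit segment-crossing step explicit; the price is that you need a nearby point on each side of $\partial A(x_s)$, whereas the paper only needs one inside. Both arguments are sound in essence.

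The one soft spot is the subtlety you flagged yourself, and your proposed repair is the weakest link. Requiring $\phi_t^f(y_2)\in\rr^n\setminus\overline{\Omega}$ forces $\overline{\Omega}\subset A(x_s)$, and neither this nor the fallback of replacing $\Omega$ by its closure covers the paper's key application: in Theorem \ref{thm:closestUEP} the lemma is applied to $\Omega=S_{l}$ with $l=V(x_u^{cl})$, where $\partial S_{l}\cap\partial A(x_s)=\{x_u^{cl}\}$, so $\overline{\Omega}\not\subset A(x_s)$ and the closure trick changes the hypotheses (and possibly $\partial\Omega$) rather than verifying them. The fix is simpler than the one you propose: a connected set that meets both $\Omega$ and $\Omega^c$ must already meet $\partial\Omega$, so it suffices that $\phi_t^f(y_1)\in\Omega$ (which follows from $\phi_t^f(y_1)\to x_s\in\text{Int}(\Omega)$, no invariance needed) and $\phi_t^f(y_2)\in A(x_s)^c\subset\Omega^c$ (which uses only the standing hypothesis $\Omega\subset A(x_s)$); alternatively, note that if $\phi_t^f(y_2)\in\partial\Omega$ then $y_2$ itself lies in $\phi_{-t}^{f}(\partial\Omega)\cap B_\varepsilon(x)$ and you are done immediately. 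With that adjustment your argument is complete and fully general.
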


	%	\begin{figure}[htp]
	%		\centering
	%		\subfigure[$x_0 \in \text{Int}(\Omega)$]{
	%			\includegraphics[width=0.7\columnwidth]{fig1a}}
	%		\subfigure[$x_0 \in \phi^f_{-t}(\Omega) \backslash \overline \Omega$]{
	%			\includegraphics[ width=0.7\columnwidth]{fig1b}}
	%		\caption{Boundaries of positively invariant sets under flow mapping.}
	%		\label{fig.1}
	%		%% label for entire figure
	%	\end{figure}
	
	%	\begin{remark}\peng{new}
	%		From Lemma \ref{lem:expansion.1} we have $\phi _{-t_1}^{f} (\Omega )$ is monotonically increasing as $t$ increases and $\phi _{-t}^{f} (\Omega )\subset A(x_{s} )$  for any $t\in (0,\,+\infty)$. This indicates that $\lim _{t\to +\infty } \phi _{-t}^{f} (\Omega)$ exists when $A(x_s)$ is bounded. In this case, by Lemma \ref{lem:expansion.1} and Lemma \ref{le:boundary.2}, we have $\lim _{t\to +\infty }\phi _{-t}^{f} (\Omega)=A(x_s)$ and $\lim _{t\to +\infty }\phi _{-t}^{f} (\partial\Omega)=\lim _{t\to +\infty }\partial\phi _{-t}^{f}(\Omega)=\partial A(x_s)$.
	%	\end{remark}
	
	Lemma \ref{lem:expansion.1} and Lemma \ref{le:boundary.2} indicates that the expended boundary of a subset of $A(x_s)$ never leaves $A(x_s)$ and hence is a global inner approximation of $\partial A(x_s)$. Moreover, it approaches $\partial A(x_s)$ as $t\to+\infty$. 
	
	Given a Lyapunov (energy) function of \eqref{eq:system.1}, for all $l\in (l_{\min} ,\;l_{\max } )$, $S_l$ is exactly a connected and positively invariant subset of $A(x_s)$. Hence, it can be expanded via the flow mapping globally, as stated in the following theorem. 
	
	\begin{theorem}
		\label{thm:energy surface}
		(\textit{Expansion of level set via the flow mapping}): Consider the system \eqref{eq:system.1} satisfying Assumption \ref{as:2}. For every level value $l$ such that $S_{l}\subset A(x_s)$ and $\partial S_{l} \bigcap \partial A(x_s)=\emptyset$, the following statements hold:
		\begin{enumerate}
			\item $S_{l}\subset \phi _{-t_1}^{f} (S_{l})\subset \phi _{-t_2}^{f} (S_{l})\subset A(x_{s} )$ for all $t_1$ and $t_2$ satisfying $0<t_1<t_2<+\infty $. 
			\item $\lim_{t\to +\infty }d(x,\phi _{-t}^{f} (\partial S_{l}))=0$ for all $x\in\partial A(x_s)$.
		\end{enumerate}	
	\end{theorem}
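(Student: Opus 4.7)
The plan is to derive Theorem \ref{thm:energy surface} directly from Lemma \ref{lem:expansion.1} and Lemma \ref{le:boundary.2} with the choice $\Omega = S_l$. All that needs to be checked is that the sublevel set $S_l$ satisfies the hypotheses of those lemmas: it is connected, positively invariant, contains $x_s$ in its interior, and is contained in $A(x_s)$.

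The containment $S_l \subset A(x_s)$ is given by hypothesis. Since $V(x_s) = l_{\min} < l$ and $S_l$ is open, $x_s \in S_l = \text{Int}(S_l)$. Connectedness is handled by the footnote convention in Section \ref{sec:2}: $S_l$ denotes the unique component of $\{x \in \rr^n : V(x) < l\}$ whose intersection with $A(x_s)$ is nonempty, which is precisely the component containing $x_s$.

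The substantive step is positive invariance. Both Lyapunov and energy functions (see \cite[Thm.~4.1]{Khalil_NonlinearSystems_2002} and \cite{ref:Chiang:survey1995}) satisfy $\dot V(x) \le 0$ along trajectories inside $A(x_s)$. So for any $x_0 \in S_l$, the map $t \mapsto V(\phi^f_t(x_0))$ is non-increasing, giving $V(\phi^f_t(x_0)) \le V(x_0) < l$ for every $t \ge 0$. The forward orbit of $x_0$ therefore remains in the open set $\{V < l\}$; being the continuous image of $[0,\infty)$, it is connected and starts in the component $S_l$, so it cannot jump to a disjoint component. This establishes positive invariance of $S_l$.

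With all hypotheses verified, statement (1) follows by chaining parts (2) and (3) of Lemma \ref{lem:expansion.1}, and statement (2) is exactly Lemma \ref{le:boundary.2}(2) applied with $\Omega = S_l$. The step I expect to require the most care is the positive invariance in the energy-function case, because $\dot V$ may vanish away from the equilibrium set; fortunately the non-increasing property alone preserves the strict inequality $V(\phi^f_t(x_0)) < l$ because it is strict at $t=0$, so no additional structural assumption on the energy function is needed.
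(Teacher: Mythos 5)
Your proposal is correct and follows essentially the same route as the paper: the paper's proof is exactly "apply Lemma \ref{lem:expansion.1} and Lemma \ref{le:boundary.2} with $\Omega = S_l$," relying on the observation stated just before the theorem that $S_l$ is a connected, positively invariant subset of $A(x_s)$ containing $x_s$. You merely fill in the routine verification of those hypotheses (openness, the component convention, and positive invariance via $\dot V \le 0$), which the paper leaves implicit.
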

	\begin{proof}
		It directly follows from Lemma \ref{lem:expansion.1} and Lemma \ref{le:boundary.2}.
	\end{proof}
	
	\begin{example}\label{eg:1}
		For the purpose of illustration, we apply our theory to the reduced three-machine power system, which has been extensively studied as a fundamental example in the literature \cite{ref:Chiang:1989-2,ref:Chiang:1989,ref:Chiang:survey1995}:
		
		\begin{equation} \label{eq:3machine} 
		\left\{\begin{array}{l} {\dot{x}_{1} =-\sin x_{1} -0.5\sin (x_{1} -x_{2} )+0.01} \\ {\dot{x}_{2} =-0.5\sin x_{2} -0.5\sin (x_{2} -x_{1} )+0.05} \end{array}\right.  
		\end{equation} 
		
		An energy function for this system is:
		\begin{equation}\label{eq:EF_3}
		\begin{aligned} 
		V(x_{1},x_{2})&=4.0035-2\cos x_{1} -\cos x_{2}  \\ 
		&-\cos (x_{1} -x_{2} )-0.02x_{1} -0.1x_{2} 
		\end{aligned}
		\end{equation}
		%where the constant, 4.0035, is added for keeping the energy function positive-definite in the entire stability region of the ASEP, $x_{s} =(0.028,0.064)$, and achieving zero at $x_{s} $. 
		Due to the periodicity of \eqref{eq:3machine}, if $x_s=(x_{1} ,x_{2} )$ is an ASEP, then $(x_{1} \pm 2m\pi ,x_{2} \pm 2n\pi )$ is an ASEP as well for all integral $m$ and $n$. Therefore, we constrain the area of interest to $X=\left\{(x_{1} ,x_{2} )|\;-2\pi \le x_{1} \le 2\pi ,-2\pi \le x_{2} \le 2\pi \right\}$.
		
		For the system \eqref{eq:3machine} with the energy function \eqref{eq:EF_3}, suppose the level set $\partial S_1=\{ x\in \rr^{n}|\;V(x)=1\}$ is the initial estimation of stability boundary, which is depicted by solid black line in Fig. \ref{fig:levelset}. The exact stability boundary is depicted by red line. For $t=1$s, $\phi _{-1}^{f} (\partial S_1)$ and $\phi _{-2}^{f} (\partial S_1)$ are shown as the Expansion I and II, respectively in Fig. \ref{fig:levelset}.
		The flow mapping expands the estimation monotonically, and the expanded boundaries are kept within $A(x_s)$, which shows the estimation is improved in a global sense.
		\begin{figure}[htb]
			%\centering
			\centering
			\includegraphics[width=1\columnwidth]{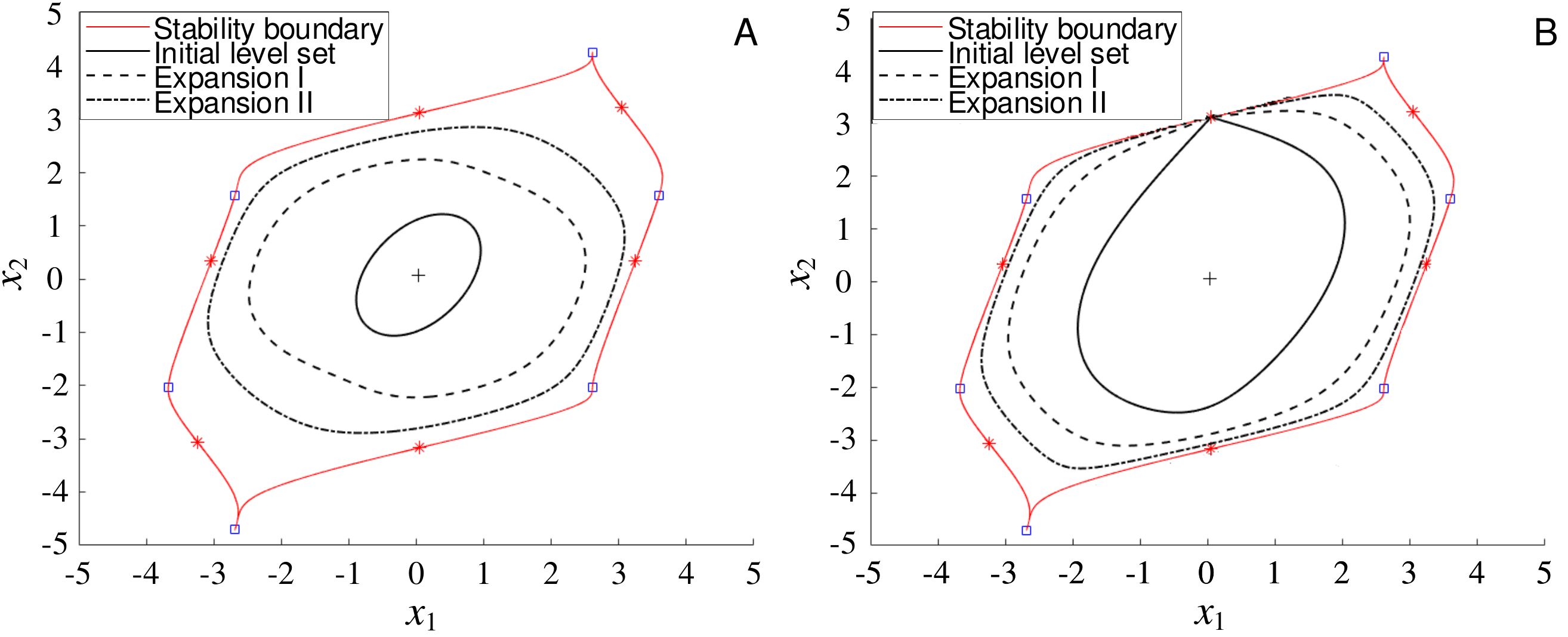}
			\caption{The expansion process of (A) the level set $\partial S_1$ and (B) the level set passing through the closest UEP. Both of them are subsets of the exact stability region.}
			\label{fig:levelset}
		\end{figure}
	\end{example}
	\begin{example}
		We use a 3-dimensional example to show the expansion of level set via the flow mapping. Consider the system in \cite[Example 4.6]{Chesi_DomainAttractionAnalysis_2011} as follows:
		\begin{equation}\label{eq:3d}
		\left\lbrace
		\begin{aligned}
		\dot{x}&=x-x^3+0.5z+y^2\\
		\dot{y}&=-y-y^3+0.5z^2-x^2\\
		\dot{z}&=x+2y-z^3+x^2-y^2
		\end{aligned} \right. 
		\end{equation}
		The system has an ASEP $(x_s,y_s,z_s)=(1.367,-0.849,0.936)$.
		A Lyapunov function of \eqref{eq:3d} is:
		\begin{equation}
		V(x,y,z)=(x-x_s)^2+(y-y_s)^2+(z-z_s)^2
		\end{equation}
		In \cite{Chesi_DomainAttractionAnalysis_2011}, the level set $\partial S_l$ with level value $l=0.291$ is used as an estimation of the stability boundary, which we depict by the red surface in Fig. \ref{fig:3d}. For $t=0.05$s, the results of the flow mapping $\phi _{-t}^{f} (\partial S_l)$ and $\phi _{-2t}^{f} (\partial S_l)$, namely the Expansion I and II, are also showed by the green and gray surfaces in Fig. \ref{fig:3d}, respectively.
		\begin{figure}[htb]
			\centering
			\includegraphics[width=0.61\columnwidth]{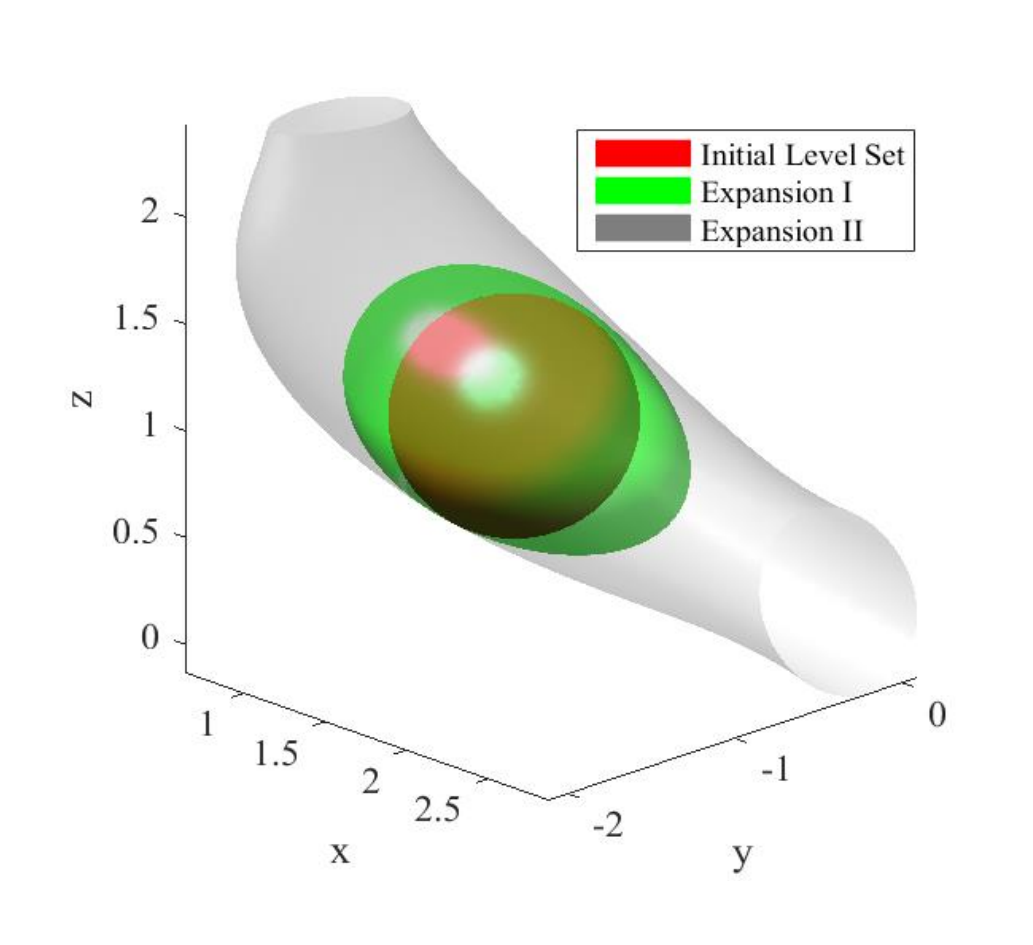}
			\caption{The expansion process in a 3-dimensional system.}
			\label{fig:3d}
		\end{figure}
	\end{example}
	
	Specifically, the level value $l$ can be chosen as the value at the closest UEP $x_u^{cl}$ on the stability boundary. The next theorem shows that $x_u^{cl}$ is the fixed point of the expansion and the global property still holds.
	
	\begin{theorem}
		\label{thm:closestUEP}
		(\textit{Expansion of level set passing through the closest UEP}): Consider the system \eqref{eq:system.1} satisfying Assumption \ref{as:2}. Let $l=V(x^{cl}_u)$. Then the following statements hold:
		\begin{enumerate}
			\item $x^{cl}_u\in \phi^f_t(\partial S_{l})$ for all $t\in \rr$.
			\item $S_{l}\subset \phi _{-t_1}^{f} (S_{l} )\subset \phi _{-t_2}^{f} (S_{l})\subset A(x_{s} )$ for all $t_1$ and $t_2$ satisfying $0<t_1<t_2<+\infty $. 
			\item $\lim_{t\to +\infty }d(x,\phi _{-t}^{f} (\partial S_{l}))=0$ for all $x\in\partial A(x_s)$.
		\end{enumerate}	
	\end{theorem}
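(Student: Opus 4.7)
The plan is to derive Theorem \ref{thm:closestUEP} as a specialization of Lemma \ref{lem:expansion.1} and Lemma \ref{le:boundary.2}, applied with $\Omega := S_l$ for $l = V(x_u^{cl})$, together with a short fixed-point argument for claim 1.

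Claim 1 is immediate: since $x_u^{cl}$ is an equilibrium of \eqref{eq:system.1}, $\phi_t^f(x_u^{cl}) = x_u^{cl}$ for every $t \in \rr$, and $V(x_u^{cl}) = l$ places $x_u^{cl}$ on $\partial S_l$, so $x_u^{cl} = \phi_t^f(x_u^{cl}) \in \phi_t^f(\partial S_l)$. For claims 2 and 3, I would verify that $\Omega = S_l$ (understood as the connected component containing $x_s$, per the footnote in Section~II) meets the hypotheses of both lemmas: $S_l$ is open and connected by construction; it is positively invariant because $\dot V \le 0$ along solutions in $A(x_s)$ by the defining property of a Lyapunov or energy function; $x_s \in \text{Int}(S_l) = S_l$ since $V(x_s) = l_{\min} < l$; and $S_l \subset A(x_s)$ because $x_u^{cl}$ attains the minimum of $V$ on $\partial A(x_s)$ by \cite{ref:Chiang:1989}, so $\partial A(x_s) \cap S_l = \emptyset$, whence connectedness of $S_l$ together with $x_s \in S_l \cap A(x_s)$ force $S_l$ into $A(x_s)$. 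With these hypotheses secured, claim 2 follows from parts 2 and 3 of Lemma \ref{lem:expansion.1}, and claim 3 is exactly part 2 of Lemma \ref{le:boundary.2}.

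The main obstacle — and the reason this theorem is not a direct corollary of Theorem \ref{thm:energy surface} — is the boundary case $l = l_{\max}$, where $\partial S_l \cap \partial A(x_s) \neq \emptyset$ at $x_u^{cl}$. The care required is to feed the open component $S_l$ rather than its closure $\overline{S_l}$ into Lemma \ref{lem:expansion.1}, so that $\text{Int}(\Omega) \subset A(x_s)$ still holds despite the single contact point on the boundary. The fact that $x_u^{cl}$ is a flow-invariant fixed point of $\phi_t^f$ is then what makes the monotone expansion of $S_l$ asserted in claim 2 compatible with $\partial S_l$ always threading through $x_u^{cl}$ as asserted in claim 1.
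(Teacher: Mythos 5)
Your proposal is correct and follows essentially the same route as the paper: claim 1 via the fixed-point property of the equilibrium $x_u^{cl}\in\partial S_l$, and claims 2 and 3 by applying Lemma \ref{lem:expansion.1} and Lemma \ref{le:boundary.2} with $\Omega=S_l$, noting that $S_l\subset A(x_s)$ with $\partial S_l$ touching $\partial A(x_s)$ only at $x_u^{cl}$. Your write-up merely spells out the hypothesis checks (positive invariance via $\dot V\le 0$, and $S_l\subset A(x_s)$ from connectedness and the minimality of $V(x_u^{cl})$ on the boundary) that the paper leaves implicit.
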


	The level set passing through the closest UEP gives the largest possible global estimation with the given Lyapunov (energy ) function. Theorem \ref{thm:closestUEP} indicates that such an estimation can be expanded by the flow mapping and approach the exact boundary, while remains being a valid global estimation for all time.
	
	\begin{example}\label{eg:2}
		To illustrate the expansion process based on the closest UEP, we consider the same three-machine system \eqref{eq:3machine}. Calculation shows that the type-1 UEP (0.04667, 3.11489) has the lowest energy value among all the UEPs on the stability boundary. Hence, $x_u^{cl}=(0.04667, 3.11489)$ and $V(x_u^{cl})=3.777$. Let $l=3.777$, $\partial S_{l}$ is the initial estimation of $\partial A(x_{s})$, which is depicted by the solid black line in Fig. \ref{fig:levelset}(b). For $t=1$s, the results of $\phi _{-t}^{f} (\partial S_1(x))$ and $\phi _{-2t}^{f} (\partial S_1(x))$ are shown in Fig. \ref{fig:levelset}(B).
		%		\begin{figure}[htp]
		%			\centering
		%			\includegraphics[width=0.8\columnwidth]{image_closest}
		%			\caption{The expansion process based on closest UEP.}
		%			\label{fig:levelset}
		%		\end{figure}
		It shows that the expansion significantly improved the estimation in a global sense and the closest UEP remains fixed during expansions. This result is consistent with Theorem \ref{thm:closestUEP}.
	\end{example}
	
	Theorem \ref{thm:energy surface} and Theorem \ref{thm:closestUEP} reveal that if a level set is completely contained in $A(x_s)$, then it can serve as a valid initial guess and can be expanded to approach $\partial A(x_s)$ globally without causing over-optimistic estimation. Nevertheless, such global property does not hold when the initial level set is not a subset of $A(x_s)$. In this case, the next subsection will show that the expansion via flow mapping can still improve the estimation locally in the sense that part of $\partial S_l$ approaches part of $\partial A(x_s)$ from within. 
	
	\subsection{Local Properties}
	\label{subsec:result.local}
	%	\begin{lemma}
	%		\label{lem:intersection}
	%		Assume $A_{1} $ and $A_{2} $ are two connected and positively invariant sets of system \eqref{eq:system.1}. Then $A:= A_{1} \cap A_{2} $ is also connected and positively invariant along the trajectories of system \eqref{eq:system.1}.  
	%	\end{lemma}
	%	\begin{proof}
	%		 Choose two arbitrary points, $x_{1} ,x_{2} \in A=A_1\cap A_2 $. Then from the simple connectivity of $A_{1} $ and $A_{2}$, it can be concluded that $A$ is also connected. Similarly, the positive invariance of $A$ along the trajectories of system \eqref{eq:system.1} can also be proved. 
	%	\end{proof}
	
	We start with the following lemma that characterizes the expansion of a level set passing through an arbitrary UEP on the stability boundary. 
	
	\begin{lemma}\label{thm:UEP}
		(\textit{Expansion of the level set passing through an arbitrary UEP on the stability boundary}): 
		Consider the system \eqref{eq:system.1} satisfying Assumption \ref{as:2}. Let $l=V(x^{b}_u)$, where $x^b_u$ is an arbitrary UEP on $\partial A(x_s)$. Define the subset $
		D_{l}:=S_{l} \cap A(x_{s})$. The following statements hold:
		\begin{enumerate}
			\item $x^{b}_u\in \phi^f_t(\partial D_{l})$ for all $t\in \rr$.
			\item $\phi^f_{-t}(D_{l}) = \phi^f_{-t}(S_{l}) \cap A(x_s)$ for all $t\in (0,+\infty)$. 
			\item $D_{l}\subset \phi _{-t_1}^{f} (D_{l} )\subset \phi _{-t_2}^{f} (D_{l})\subset A(x_{s} )$ for all $t_1$ and $t_2$ satisfying $0<t_1<t_2<+\infty $. 
			\item $\lim_{t\to +\infty }d(x,\phi _{-t}^{f} (\partial D_{l}))=0$ for all $x\in\partial A(x_{s})$.
		\end{enumerate}	
	\end{lemma}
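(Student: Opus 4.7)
The plan is to view the lemma as an application of the global results (Lemma \ref{lem:expansion.1} and Lemma \ref{le:boundary.2}) to the truncated set $D_l = S_l \cap A(x_s)$ (or, strictly speaking, to its connected component containing $x_s$), after first verifying that $D_l$ meets the hypotheses of those lemmas and then extracting the extra information about the fixed point $x^b_u$. Two preliminary facts will be used throughout: (a) $A(x_s)$ is invariant in both forward and backward time, since $\phi^f_\tau(x)\to x_s$ holds if and only if $\phi^f_{\tau+t}(x)\to x_s$; and (b) inside $A(x_s)$, the Lyapunov/energy function $V$ is non-increasing along trajectories, strictly so off the equilibrium set.

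With these facts, I would first show that $D_l$ is open, contains $x_s$ in its interior (because $V(x_s)=l_{\min}<l$), lies in $A(x_s)$ by construction, and is positively invariant: for $x\in D_l$, forward invariance of $A(x_s)$ keeps the trajectory in $A(x_s)$, and monotonicity of $V$ keeps it in $S_l$. Statement (2) then follows immediately from the bijectivity of $\phi^f_{-t}$ together with backward invariance of $A(x_s)$:
\begin{equation*}
\phi^f_{-t}(D_l)=\phi^f_{-t}(S_l\cap A(x_s))=\phi^f_{-t}(S_l)\cap\phi^f_{-t}(A(x_s))=\phi^f_{-t}(S_l)\cap A(x_s).
\end{equation*}
Statements (3) and (4) would then be obtained by invoking Lemma \ref{lem:expansion.1} (items 2 and 3) and Lemma \ref{le:boundary.2} (item 2) with $\Omega=D_l$, using also item 1 of Lemma \ref{le:boundary.2} which guarantees $\partial(\phi^f_{-t}(D_l))=\phi^f_{-t}(\partial D_l)$ so that the limit in (4) really is about the expanded boundary.

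For statement (1), the point $x^b_u$ is an equilibrium, so it suffices to show $x^b_u\in\partial D_l$; then $\phi^f_t(x^b_u)=x^b_u\in\phi^f_t(\partial D_l)$ for every $t\in\mathbb{R}$. The membership argument has two halves: first, $V(x^b_u)=l$ and $x^b_u\in\partial A(x_s)$ together imply $x^b_u\notin S_l$ and $x^b_u\notin A(x_s)$, so $x^b_u\notin D_l$; second, $x^b_u$ is approached by points of $D_l$, because by Theorem \ref{thm:boundary.1} its stable manifold $\mathcal{W}^s(x^b_u)$ sits on $\partial A(x_s)$ and nearby trajectories on the ``inside'' side enter $A(x_s)$ with $V$-values that cross below $l$ by continuity of $V$ and $V(x^b_u)=l$. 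Hence $x^b_u\in\overline{D_l}\setminus D_l\subset\partial D_l$.

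The main obstacle I anticipate is the careful handling of $\partial D_l$, which, unlike $\partial S_l$, is a mixed object: part of it lies along $\partial S_l\cap\overline{A(x_s)}$ and part along $\partial A(x_s)\cap\overline{S_l}$. Ensuring connectedness of the component of $D_l$ containing $x_s$, and arguing cleanly that the second portion of $\partial D_l$ is transported by $\phi^f_{-t}$ within $\partial A(x_s)$ (by invariance of the stability boundary) while the first portion is swept outward toward $\partial A(x_s)$ in the sense of Lemma \ref{le:boundary.2}, is where the most care is required; the rest of the argument is a bookkeeping exercise on top of the global lemmas.
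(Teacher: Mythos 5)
Your proposal is correct and follows essentially the same route as the paper: show that $D_l$ is a connected, positively invariant set with $x_s\in\text{Int}(D_l)\subset A(x_s)$, invoke Lemma \ref{lem:expansion.1} and Lemma \ref{le:boundary.2} for statements 3) and 4), obtain statement 2) from the backward invariance of $A(x_s)$ together with the bijectivity of the flow, and reduce statement 1) to the facts that $x^b_u$ is an equilibrium and lies on $\partial D_l$. The only caveat is that your ``by continuity'' argument that points of $D_l$ accumulate at $x^b_u$ is looser than needed (continuity only gives $V$ close to $l$, not below it, at nearby points of $A(x_s)$); the clean justification is that $\mW^{u}(x^b_u)\cap A(x_s)\neq\emptyset$ and the backward flow along this piece of the unstable manifold stays in $A(x_s)$ with $V<l$ while converging to $x^b_u$ --- a detail the paper itself also leaves implicit by simply citing the argument of Theorem \ref{thm:closestUEP}.
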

	
	Lemma \ref{thm:UEP} indicates that $\partial D_l$ is a global estimation and tends to $\partial A(x_s)$ via flow mapping expansion. However, since $\partial D_{l}$ consists of both $\partial S_{l} $ and $\partial A(x_{s})$, it is unclear how the level set $\partial S_{l}$ behave via the flow mapping expansion. In this regard, the following theorem reveals the topological characteristics of $\partial S_l$ in the limit under the flow mapping. 
	
	\begin{theorem}
		\label{thm:boundary.3}
		(\textit{Characteristics of the limit set of the expanded level set}): Consider the system \eqref{eq:system.1} satisfying Assumptions \ref{as:1} and \ref{as:2}. Let $l=V(x^{b}_u)$, where $x^b_u$ is an arbitrary UEP on $\partial A(x_s)$. The following statement holds
		\begin{equation*}
		\mathop{\lim}\limits_{t\to +\infty }d(x, \phi _{-t}^{f} (\partial S_{l}\cap\overline{A(x_{s})}))=0,\; \forall x\in\bigcup _{x_{u} \in \partial A(x_{s} )\cap S_{l}^{c}\cap E} \mW^{s} (x_{u}).
		\end{equation*}
	\end{theorem}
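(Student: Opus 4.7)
The plan is to refine Lemma \ref{thm:UEP}, whose backward-flow image of $\partial D_l$ approaches $\partial A(x_s)$, by separating $\partial D_l$ into its two natural pieces $\partial S_l\cap\overline{A(x_s)}$ and $\partial A(x_s)\cap\overline{S_l}$. The second piece is contained in $\bigcup_{V(x_u')\le l}\mW^{s}(x_u')$, because $V$ is non-increasing along trajectories on $\partial A(x_s)$, so $V(x_u')=\lim_{s\to\infty}V(\phi_s^{f}(p))\le V(p)\le l$ for every $p\in\partial A(x_s)\cap\overline{S_l}\cap\mW^{s}(x_u')$. Hence every stable manifold of a UEP with $V(x_u)\ge l$ must be approached by $\phi_{-t}^{f}$ applied to the first piece, which is exactly the content of the theorem. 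I will prove this by a direct constructive argument, exhibiting for each $x\in\mW^{s}(x_u)$ and each large $t$ an explicit $y_t\in\partial S_l\cap\overline{A(x_s)}$ with $\phi_{-t}^{f}(y_t)\to x$.

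Fix $x_u\in\partial A(x_s)\cap S_l^c\cap E$ and $x\in\mW^{s}(x_u)$. Along the forward trajectory of $x$, $V$ strictly decreases toward $V(x_u)\ge l$, so $V(\phi_s^{f}(x))>l$ for every $s\ge 0$; in particular the trajectory of $x$ never meets $\partial S_l$. Using that $x\in\partial A(x_s)$ with $A(x_s)$ open and the local manifold structure of $\mW^{s}(x_u)$ guaranteed by Assumption \ref{as:1}, I pick a short continuous arc $\gamma:[0,1]\to\overline{A(x_s)}$ with $\gamma(0)=x$, $\gamma(r)\in A(x_s)$ and $V(\gamma(r))>l$ for $r\in(0,1]$. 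For each interior point $z=\gamma(r)$, the trajectory $\phi_s^{f}(z)$ converges to $x_s$ so $V(\phi_s^{f}(z))$ eventually drops below $l$, while $V(z)>l$; hence a unique hitting time $s^{\ast}(z)>0$ exists with $\phi_{s^{\ast}(z)}^{f}(z)\in\partial S_l\cap A(x_s)\subset\partial S_l\cap\overline{A(x_s)}$. The implicit function theorem, invoked via $\dot V<0$ at the crossing, makes $s^{\ast}\circ\gamma$ continuous on $(0,1]$, and a continuous-dependence argument forces $s^{\ast}(\gamma(r))\to+\infty$ as $r\to 0^+$ (otherwise a bounded limit would yield $V(\phi_{s^{\ast}}^{f}(x))=l$, contradicting the strict inequality above).

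Equipped with a continuous, unbounded map $s^{\ast}\circ\gamma$ on $(0,1]$, the intermediate value theorem then provides, for every sufficiently large $t$, some $r_t\in(0,1]$ with $s^{\ast}(\gamma(r_t))=t$, and necessarily $r_t\to 0$, so $\gamma(r_t)\to x$. Setting $y_t:=\phi_t^{f}(\gamma(r_t))\in\partial S_l\cap\overline{A(x_s)}$ delivers $\phi_{-t}^{f}(y_t)=\gamma(r_t)\to x$, whence $d(x,\phi_{-t}^{f}(\partial S_l\cap\overline{A(x_s)}))\to 0$ as required. I expect the main obstacle to be the continuity and divergence of the hitting-time map $s^{\ast}$: the implicit-function step needs $\dot V<0$ at the crossing, which is automatic for strict Lyapunov functions but for a general energy function requires invoking the regular-level-set structure implicit in Assumption \ref{as:2} for $l=V(x_u^{b})$; a secondary technical point is the arc construction itself, which needs a brief topological argument that $A(x_s)$ accumulates at $x$ along a continuous path, a fact that follows from the transverse $(n-k)$-dimensional structure of $\mW^{s}(x_u)$ asserted in Assumption \ref{as:1}.
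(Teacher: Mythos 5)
Your route is genuinely different from the paper's. The paper never constructs witnesses pointwise: it sets $D_l=S_l\cap A(x_s)$, splits $\partial D_l$ into $\partial D_1=\partial A(x_s)\cap S_l$ and $\partial D_2=\partial S_l\cap\overline{A(x_s)}$, splits the boundary equilibria into those lying in $S_l$ and those in $S_l^c$, with corresponding unions of stable manifolds $\partial A_1$ and $\partial A_2$, proves by invariance and monotonicity that $\lim_{t\to+\infty}\phi^f_{-t}(\partial D_1)$ is exactly $\partial A_1$, and then obtains the theorem by subtraction: Lemma \ref{thm:UEP} gives $d(x,\phi^f_{-t}(\partial D_l))\to 0$ for every boundary point, $\phi^f_{-t}(\partial D_1)$ never leaves $\partial A_1$, so every $x\in\partial A_2$ with $d(x,\overline{\partial A_1})>0$ must be approached by $\phi^f_{-t}(\partial D_2)$, and points of $\partial A_2\cap\overline{\partial A_1}$ are recovered by a density-plus-triangle-inequality step. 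Your opening observation (that $\partial A(x_s)\cap\overline{S_l}$ only meets stable manifolds of UEPs with $V\le l$) is essentially this partition, but your main engine---first hitting times of the level set along the forward flow, matched exactly to $t$ via the intermediate value theorem along an arc---is a different, more constructive argument, which where it works even yields explicit points $y_t\in\partial S_l\cap A(x_s)$ with $\phi^f_{-t}(y_t)\to x$.

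The genuine gap is the step you defer as a brief topological aside: the existence of a continuous arc $\gamma:[0,1]\to\overline{A(x_s)}$ with $\gamma(0)=x$ and $\gamma((0,1])\subset A(x_s)$, i.e.\ arcwise accessibility of the boundary point $x$ from inside the stability region. This does not simply follow from Assumption \ref{as:1}: a boundary point of an open connected set need not be reachable by an arc from inside (comb-type boundaries), and under Assumption \ref{as:1} the boundary near a point of $\mW^{s}(x_u)$ for a type-$k$ UEP with $k\ge 2$ is in general an accumulation of (possibly infinitely many) codimension-one stable manifolds clustering onto $\mW^{s}(x_u)$; showing that $x$ lies in the closure of a single connected piece of $A(x_s)\cap\{V>l\}$ therefore needs a real argument (a $\lambda$-lemma-type analysis of the local boundary structure), and it is precisely what your IVT step cannot do without---absent a connected family on which $s^{\ast}$ is continuous and unbounded, you cannot match the hitting time to $t$, and with only a sequence $z_k\to x$ the points $\phi^f_{s^{\ast}(z_k)-t}(z_k)$ need not be near $x$. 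Note that mere accumulation of points of $A(x_s)$ at $x$, which is all that $x\in\partial A(x_s)$ provides and all the paper ever uses (in Lemmas \ref{lem:expansion.1} and \ref{thm:UEP}), is not sufficient for your construction; the paper's set-theoretic subtraction is designed exactly to avoid this accessibility issue. The secondary caveats you flag yourself (strictness of the decrease of $V$ so that $V(\phi^f_s(x))>l$ for all $s\ge 0$, and $\dot V<0$ at the crossing for continuity of $s^{\ast}$) are fine for energy functions and should just be stated; granting them and an accessibility lemma, the remainder of your argument (divergence of $s^{\ast}\circ\gamma$ near $r=0$, $r_t\to 0$, and $y_t=\phi^f_t(\gamma(r_t))\in\partial S_l\cap\overline{A(x_s)}$) is sound.
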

	
	Recall Theorem \ref{thm:boundary.1} that $\partial A(x_s)$ consists of all the stable manifolds of all UEPs on the boundary. Theorem \ref{thm:boundary.3} shows that the level set $\partial S_l$ tends to only part of $\partial A(x_s)$, unless $\partial A(x_{s} )\cap S_{l}^{c}\cap E=\partial A(x_{s})\cap E$. It is easy to show that the equality holds only if $x_u^b$ is the closest UEP $x_u^{cl}$, in which case the result of Theorem \ref{thm:boundary.3} is coincident with Theorem \ref{thm:closestUEP}. The segments of $\partial A(x_s)$ that $\partial S_l$ will tends to consists of the stable manifolds of UEPs that are located on $\partial A(x_s)\cap S^c_{l}$. 
	
	%	It hence provides a theoretic foundation for applying the expansion scheme to improve the local estimation methods of stability boundary, such as the CUEP method as stated in the following corollary.
	
	\begin{corollary} \label{cro:cuep}
		(\textit{Expansion of the level set passing through the CUEP.}): Assume system \eqref{eq:system.1} satisfies Assumptions 1 and 2. Let $\partial S_{ct}$ denote the level set passing through the CUEP, $x^{ct}_u$. Then, it holds that
		\begin{eqnarray*}
			\lim _{t\to +\infty } d(x,\phi _{-t}^{f} (\partial S_{ct} ))=0,\;\; \forall x\in\mW^{s} (x^{ct}_u ).
		\end{eqnarray*}
	\end{corollary}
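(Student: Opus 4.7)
The plan is to obtain this corollary as an immediate specialization of Theorem \ref{thm:boundary.3}, in which the arbitrary boundary UEP $x_u^b$ is taken to be the CUEP $x_u^{ct}$. Since the CUEP is by construction a UEP lying on $\partial A(x_s)$, it qualifies as a legal choice of $x_u^b$, and $l := V(x_u^{ct})$ is precisely the level value defining $\partial S_{ct}$.

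The first step is to verify that $x_u^{ct}$ belongs to the index set $\partial A(x_s)\cap S_l^c\cap E$ appearing in Theorem \ref{thm:boundary.3}. Membership in $E\cap \partial A(x_s)$ is by definition of a CUEP, while $V(x_u^{ct})=l$ together with the strict-inequality definition $S_l=\{V(x)<l\}$ puts $x_u^{ct}$ in the complement $S_l^c$. Consequently, $\mW^s(x_u^{ct})$ is one of the manifolds appearing in the union $\bigcup_{x_u\in\partial A(x_s)\cap S_l^c\cap E}\mW^s(x_u)$, so Theorem \ref{thm:boundary.3} yields
\begin{equation*}
\lim_{t\to+\infty} d\bigl(x,\phi_{-t}^f(\partial S_{ct}\cap \overline{A(x_s)})\bigr)=0,\quad \forall x\in\mW^s(x_u^{ct}).
\end{equation*}

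The second step is a trivial monotonicity argument to replace $\partial S_{ct}\cap \overline{A(x_s)}$ with the full level set $\partial S_{ct}$. Since $\partial S_{ct}\cap \overline{A(x_s)}\subseteq \partial S_{ct}$, applying the flow mapping (which preserves inclusions setwise) gives $\phi_{-t}^f(\partial S_{ct}\cap \overline{A(x_s)})\subseteq \phi_{-t}^f(\partial S_{ct})$, hence
\begin{equation*}
d\bigl(x,\phi_{-t}^f(\partial S_{ct})\bigr)\le d\bigl(x,\phi_{-t}^f(\partial S_{ct}\cap \overline{A(x_s)})\bigr).
\end{equation*}
Letting $t\to+\infty$ and using the previous step delivers the corollary.

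Honestly, there is no real obstacle here: the proof is essentially one line once Theorem \ref{thm:boundary.3} is in hand. The only subtleties to be careful about are (i) confirming that $V(x_u^{ct})=l$ places the CUEP in $S_l^c$ rather than in $S_l$ (this uses that $S_l$ is defined by a strict inequality), and (ii) observing that replacing a set by a superset can only shrink distances, so the convergence statement transfers from $\partial S_{ct}\cap \overline{A(x_s)}$ to $\partial S_{ct}$ and not the other way around. The corollary's real content is interpretive: among all UEPs on $\partial A(x_s)$ whose stable manifolds are recovered in the limit by Theorem \ref{thm:boundary.3}, the CUEP's stable manifold is guaranteed to be among them, which is exactly the segment of $\partial A(x_s)$ that matters for BCU-type transient stability assessment.
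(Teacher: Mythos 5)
Your proposal is correct and matches the paper's route: the corollary is obtained exactly as an instance of Theorem \ref{thm:boundary.3} with $x_u^b = x_u^{ct}$, the CUEP lying on $\partial A(x_s)$ and in $S_l^c$ because $V(x_u^{ct})=l$ while $S_l$ uses a strict inequality. Your extra step passing from $\partial S_{ct}\cap\overline{A(x_s)}$ to $\partial S_{ct}$ via $d(x,B)\le d(x,A)$ for $A\subseteq B$ is a small but valid bridge that the paper leaves implicit.
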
  
	
	Corollary \ref{cro:cuep} shows the expansion method can apply to the local boundary estimation from the BCU method. Since the expanded boundary approaches the entire stable manifold of $x^{ct}_u $, the estimation accuracy can still be guaranteed even when the exit point is away from $x^{ct}_u$, which has been the main concerns of the BCU mehtod in applications. Also, note that the set $\partial A(x_s)\cap S^c_{l}$ may contain other UEPs in addition to the CUEP. In such a case, the valid scope of approximate boundary will become larger and more reliable. 
	
	%	It is interesting to re-visit the closest UEP case characterized in Theorem \ref{thm:closestUEP}. Noticing $S_{l} \subset A(x_s)$ and the set $\partial A(x_s)\cap S^c_{l}$ includes all the UEPs on the stability boundary, we immediately have the following corollary: 
	
	%	\begin{corollary} \label{cro:closestUEP}
	%		(\textit{Expansion of the level set passing through the closest UEP.}): Assume system \eqref{eq:system.1} satisfies Assumptions 1 and 2. Denote $\partial S_{l}$ as the level set passing through the closest UEP, $x^{cl}_u$, on the stability boundary. Then under the operation of the flow mapping $\phi _{t}^{f}$, the following statement holds:
	%		\begin{eqnarray*}
	%		\lim_{t\to +\infty }d(x,\phi _{-t}^{f} (\partial S_{l}))=0,\, \forall x\in\partial A(x_s)
	%		\end{eqnarray*}
	%	\end{corollary}  
	%	
	%	This corollary is in coincidence with Theorem \ref{thm:closestUEP}. In this sense, the global properties developed in Section \ref{subsec:result.local} can be recognized as special cases of the local ones.  
	
	\begin{example}
		We use the same system \eqref{eq:3machine} to illustrate the local expansion. Suppose a fault occurs, and the corresponding CUEP is (0.0467,-3.1683). The level set $\partial S_{l}$ passing through the CUEP is depicted by the solid black line in Fig. \ref{fig:CUEP}, serving as the initial local estimation of stability boundary. 
		\begin{figure}[htb]
			\centering
			\includegraphics[width=0.8\columnwidth]{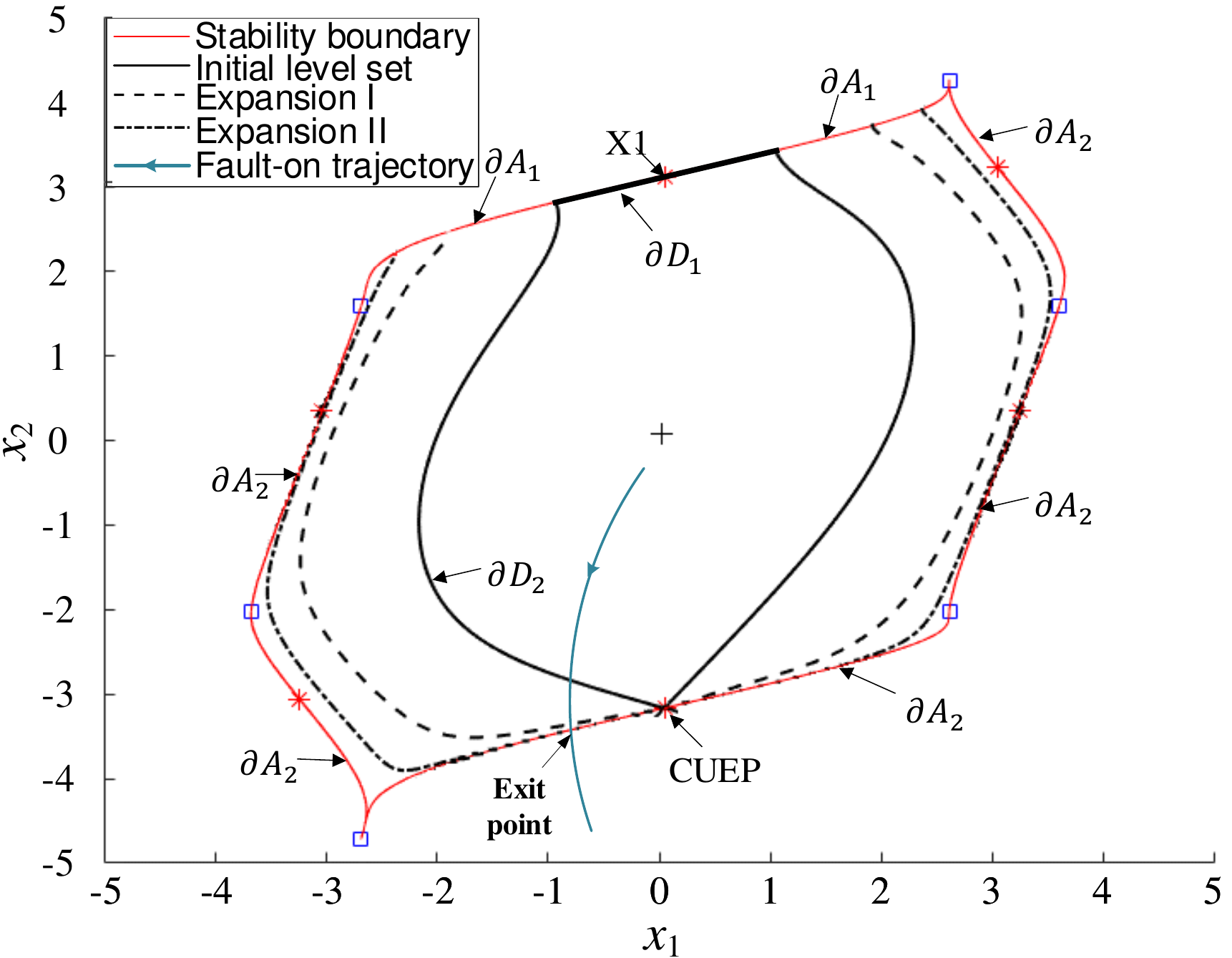}
			\caption{The expansion process of the level set passing through the CUEP.}
			\label{fig:CUEP}
		\end{figure}
		
		Several important sets mentioned in Theorem \ref{thm:boundary.3} and its proof are also marked in this figure. This is a simple case with only one UEP (0.4667, 3.1149) contained in $S_{l} $. Thus, we have that $X_{1}=\left\{\left(0.{\rm 4667},{\rm 3}.{\rm 1149}\right)\right\}$, and $X_{2} $ is composed of other EPs. 
		The flow mapping $\phi_{-t}^f$ is applied to $\partial S_{l}$, with $t=1.5$s. Fig. \ref{fig:CUEP} displays the results of Expansion I and II. It shows that the estimated region expands monotonically and $\partial D_2=\partial S_{ct} \cap \overline{A(x_{s})}$ tends to part of $\partial A_2$ that consists of stable manifolds of UEPs in $X_2$.
	\end{example}

	\section{ Expansion Algorithms}
	\label{sec:application}
	In this section, we propose algorithms to implement our expansion theories. The algorithms rely on a Runge-Kutta scheme to approximate the flow mapping, which will be introduced first. Then, the scheme is applied to level sets, leading to an algorithm that improves the stability boundary estimation. When the CCT for a given fault rather than the stability boundary is of interest, we propose an algorithm to improve the initial CCT estimation, which is more computationally efficient compared with the first one.
	\subsection{ Basic Idea}
	\label{sec:expansion scheme}
	The flow mapping $\phi_t^f$ is the key operator to realize the expansion. However, to obtain the closed-form expression of $\phi_t^f$ essentially requires solving \eqref{eq:system.1} analytically, which is usually impossible for power systems. Alternatively, following the idea of numerical ODE methods, we can approximate $\phi_t^f$ by the well-known Runge-Kutta method.
	
	To calculate the expanded boundary $\phi_{-t}^f(\partial S_l)$ for some $t>0$, it is one way to calculate $\phi_{-t}^f(x)$ for all $x\in\partial S_l$ so that $\phi_{-t}^f(\partial S_l)=\cup_{x\in\partial S_l}\phi_{-t}^f(x)$, which was the idea in the classical TRM \cite{ref:Genesio:1985}. However, it is another and more efficient way that we first refine the Lyapunov (energy) function, and then obtain the expansion by the identity
	$$\phi_{-t}^f(\partial S_l)=\left\lbrace  x\in\rr^n |\; V(\phi_{t}^f(x))=l \right\rbrace.$$
	The latter approach avoids point-wise integration for all $x\in\partial S_l$, which is of infinite-dimension. It also provides an approximate closed-form expression for the boundary, i.e., $V(\phi_{t}^f(x))=l$. In light of these considerations, we propose the following general expansion scheme.% Chiang proposes two algorithms based on symbolic calculation, which are proved theoretically feasible to high-dimensional general systems \cite{ref:Chiang:1989-2,ref:Chiang:1990}. The essences of these two algorithms are the backward Euler integration method and the Trapezoidal integration method, respectively.   
	
	\textbf{General Expansion Scheme}: For a given Lyapunov (energy) function $V(x)$ of system \eqref{eq:system.1}, choose an appropriate step size $h$, and construct the following iterative sequence 
	\begin{equation} \label{eq:ExpansionScheme} 
	\begin{aligned}
	V_{1} (x)&=V(N_{h}^{f} (x))\\
	V_{2} (x)&=V_{1} (N_{h}^{f} (x))\\
	&\vdots\\
	V_{k} (x)&=V_{k-1} (N_{h}^{f} (x)).
	\end{aligned}
	\end{equation} 
	Here, 
	\begin{equation} \label{eq:RK}
	N_{h}^{f} (x)= x+h\sum _{i=1}^{s}b_ik_i,\qquad
	k_i=f(x+h\sum_{j=i}^{i-1} a_{ij}k_j),
	\end{equation} 
	is the explicit $s$-stage Runge-Kutta method \cite[Chapter 1]{iserles_2008} so that $\phi_{h}^f(x)\approx N_{h}^{f} (x)$. The constant coefficients $a_{ij}$  (for $1 \leq j < i \leq s$) and $b_i$ (for $i = 1,\ldots,s$) can be found in the Butcher tableau \cite{butcher1963coefficients}.
	
	We remark that the schemes proposed in the pioneering work by Chiang et al. \cite{ref:Chiang:1989-2,ref:Chiang:1990} can be viewed as special cases of \eqref{eq:RK} with $s=1$, i.e., the Euler method $N_{h}^{f} (x)=x+hf(x)$; or $s=2$, i.e., the improved Euler method
	\begin{equation}\label{eq:RK2}
	N_{h}^{f} (x)=x+h/2(f(x)+f(x+hf(x))).
	\end{equation}
	
	Generally, a higher order of the Runge-Kutta method will result in a smaller error but it also leads to increased computational burden. As shown in Section \ref{sec:5}, we find that the 2-order method \eqref{eq:RK2} is sufficiently accurate meanwhile computationally tractable for symbolic calculations. When only numeric calculations are involved, a 3-order method is still tractable, the formula of which reads
	\begin{equation}\label{eq:RK3}
	\begin{aligned}
	N_{h}^{f}(x)=& x+\frac{h}{6}f(x)+\frac{2h}{3}f\Big(x+\frac{h}{2}f(x)\Big)\\
	&+\frac{h}{6}f\Big(x-hf(x)+2hf\big(x+\frac{h}{2}f(x)\big)\Big).
	\end{aligned}
	\end{equation}
	\subsection{ Improving Stability Boundary Estimation}
	The general expansion scheme can be applied to level sets that estimate the stability boundary. Especially, our theory ensures that the scheme can improve local estimations that are more commonly used in practice compared with global ones, e.g., the BCU and the PEBS methods.  We propose the following Algorithm \ref{al:1} for the implementation.
	
	\begin{algorithm}[htb]
		\caption{Improving Stability Boundary Estimation}
		\begin{algorithmic}[1]\label{al:1}
			\renewcommand{\algorithmicrequire}{\textbf{Input:}}
			\renewcommand{\algorithmicensure}{\textbf{Output:}}
			\REQUIRE $V(x)$, $V_{cr}$, maximum iteration number $M$, step size $h$, and $s$.
			\ENSURE The improved stability boundary estimation.
			\STATE Set $i=1$, $V_0(x)=V(x)$
			\\ \textit{LOOP Process}:
			\WHILE {$i \leq M$}
			\STATE Calculate $V_i(x)=V_{i-1}(N_h^f(x))$ with the $s$-stage scheme \eqref{eq:RK}
			\STATE Let $i=i+1$
			\ENDWHILE
			\RETURN The improved stability boundary estimation:
			\begin{equation*}\label{eq:expandboudary}
			\partial S_M=\left\lbrace  x\in\rr^n |\; V_M(x)=V_{cr} \right\rbrace.
			\end{equation*}
		\end{algorithmic} 
	\end{algorithm}
	
	The BCU method identifies the corresponding CUEP of a fault and uses the level set $\partial S=\left\lbrace  x\in\rr^n|\; V(x)=V_{cr} \right\rbrace$ that passes through the CUEP as a local stability boundary estimation \cite{Chiang_TheoreticalfoundationBCU_1995}. Corollary \ref{cro:cuep} shows that the expansion of $\partial S$ via flow mapping can effectively reduce the conservativeness.
	
	The PEBS method uses the first local maximum of the potential energy along the fault-on trajectory as the critical level value $V_{cr}$. And the level set $\partial S=\left\lbrace  x\in\rr^n|\; V(x)=V_{cr} \right\rbrace$ is a local estimation of the stability boundary.
	We remark that the PEBS method may give conservative or over-optimistic local estimations \cite{Chiang_Foundationspotentialenergy_1988}. For a conservative estimation, $\partial S$ is contained in the real boundary $\partial A$ locally. The expansion result is identical to the BCU situation. However, for over-optimistic estimations, Lemma \ref{thm:UEP} cannot directly apply. Nevertheless, our simulation results shows that the expansion is still effective for all tested PEBS cases. The theory for shrinking over-optimistic estimations is our future task.
	
	%	With a proper order $s$ of algorithm \eqref{eq:ExpansionScheme}, the expansion sequence \eqref{eq:expandboudary} can result in an improved boundary estimation with satisfactory accuracy in a few iterations.
	
	Typically, Algorithm \ref{al:1} requires symbolic calculation. This produces a closed-form estimated boundary $\partial S_k=\left\lbrace  x\in\rr^n|\; V_k(x)=V_{cr} \right\rbrace$, empowered by a closed-form iterative function $V_k(x)$. To our best knowledge, this idea was first introduced in \cite{Chiang_constructivemethoddirect_1988} by Chiang et al. as a constructive methodology. Our results in this paper, further enlarge the valid scope of this idea into the local cases.
	%	\subsection{ Improve stability boundary Estimation of PEBS}
	\label{sec:PEBS}
	\subsection{ Improving CCT Estimation}
	%	CCT is a significant indicator in transient stability analysis of power systems, as stated in Section \ref{sec:1}. In this section, we apply our expansion theory to CCT calculation, proposing a fast CCT estimation algorithm.
	
	Consider a given fault-on trajectory $x_{F}(t)$ starting from the pre-fault initial value $x_{0pre}$. In direct methods, a critical level value $V_{cr}$ is calculated, then the first time when $V(x_F(t))\geq V_{cr}$ is used to estimate the CCT. We assume that the estimated CCT is conservative, i.e., is smaller than the real CCT, which results from the conservative nature of the estimated stability boundary. 
	
	The expansion methodology can effectively reduce the conservativeness and hence improve the CCT estimation.
	Note that, since CCT concerns only one particular fault, the full characterization of the stability boundary is luxuriously exceeding our needs. We only need to evaluate the value of $V(x)$ along the fault-on trajectory $x_F(t)$, and hence avoid symbolic calculations as in Algorithms \ref{al:1}.
	Based on this observation, we propose the following algorithm to improve the CCT estimation.
	\begin{algorithm}[htb]
		\caption{Improving CCT Estimation}
		\begin{algorithmic}[1]\label{al:3}
			\renewcommand{\algorithmicrequire}{\textbf{Input:}}
			\renewcommand{\algorithmicensure}{\textbf{Output:}}
			\REQUIRE $x_F(t)$, $V(x)$, $V_{cr}$, maximum iteration number $M$, iteration step size $h$, and $s$.
			\ENSURE  the improved CCT estimation
			\\ \textit{Initialization}:
			\STATE Find $t_0$ such that $V(x_F(t_0))\geq V_{cr}$ for the first time.
			\STATE Set $i=1$, $x_0=x_F(t_0)$, $V_0(x)=V(x)$.
			\\ \textit{LOOP Process}:
			\WHILE {$i \leq M$}
			\STATE Calculate $V_i(x_F(t))=V_{i-1}(N_h^f(x_F(t)))$ with the $s$-stage scheme \eqref{eq:RK} for $t\geq t_{i-1}$, until $t=t^*$ such that $V_i(x_F(t^*))\geq V_{cr}$ for the first time. 
			\STATE Let $t_i=t^*$, $i=i+1$.
			\ENDWHILE
			\RETURN $CCT=t_M$ 
		\end{algorithmic} 
	\end{algorithm}
	
	Since the flow mapping here only acts on the fault-on trajectory, the iteration $V_i(x)=V_{i-1}(N_h^f(x))$ can be executed point-wise by numeric calculation effectively.
	Geometrically, Algorithm \ref{al:3} expands the estimated boundary $\{x\in\rr^n|V_i(x)=V_{cr}\}$ whenever the fault-on trajectory hits it, i.e., $V_i(x_F(t^*))=V_{cr}$. In another word, the expansion process is triggered only when it is in need. This event-trigger-like process is illustrated in Fig. \ref{fig:al3} Section \ref{sec:5}. All these properties contribute to accelerate the algorithm.
	
	%	We remark that Algorithm \ref{al:3} only applies to conservative initial estimations because an increasing order of search is used in the loop process. %Although a decreasing order can be adopted for over-optimistic situations, it is not a practical method because we cannot predict in advance whether the result is conservative or optimistic.
	\section{ Case Studies}\label{sec:5}
	\subsection{ Simulation Setup}
	In this section, we apply the proposed algorithms to the IEEE 39-bus power system \cite{ref:hiskens2013ieee}, as shown in Fig.\ref{fig:IEEE39}. 
	We first study the zero transfer conductance (lossless) model, which permits a rigorous energy function, and hence our theories apply. We then study the non-zero transfer conductance (lossy) system model, whose energy function is believed to be path-dependent so that approximation of energy function is used. Our simulation results show that the expansion method is still effective in lossy models, while some fluctuation in the process is observed.
	\begin{figure}[htp]
		\centering
		\includegraphics[width=.75\columnwidth]{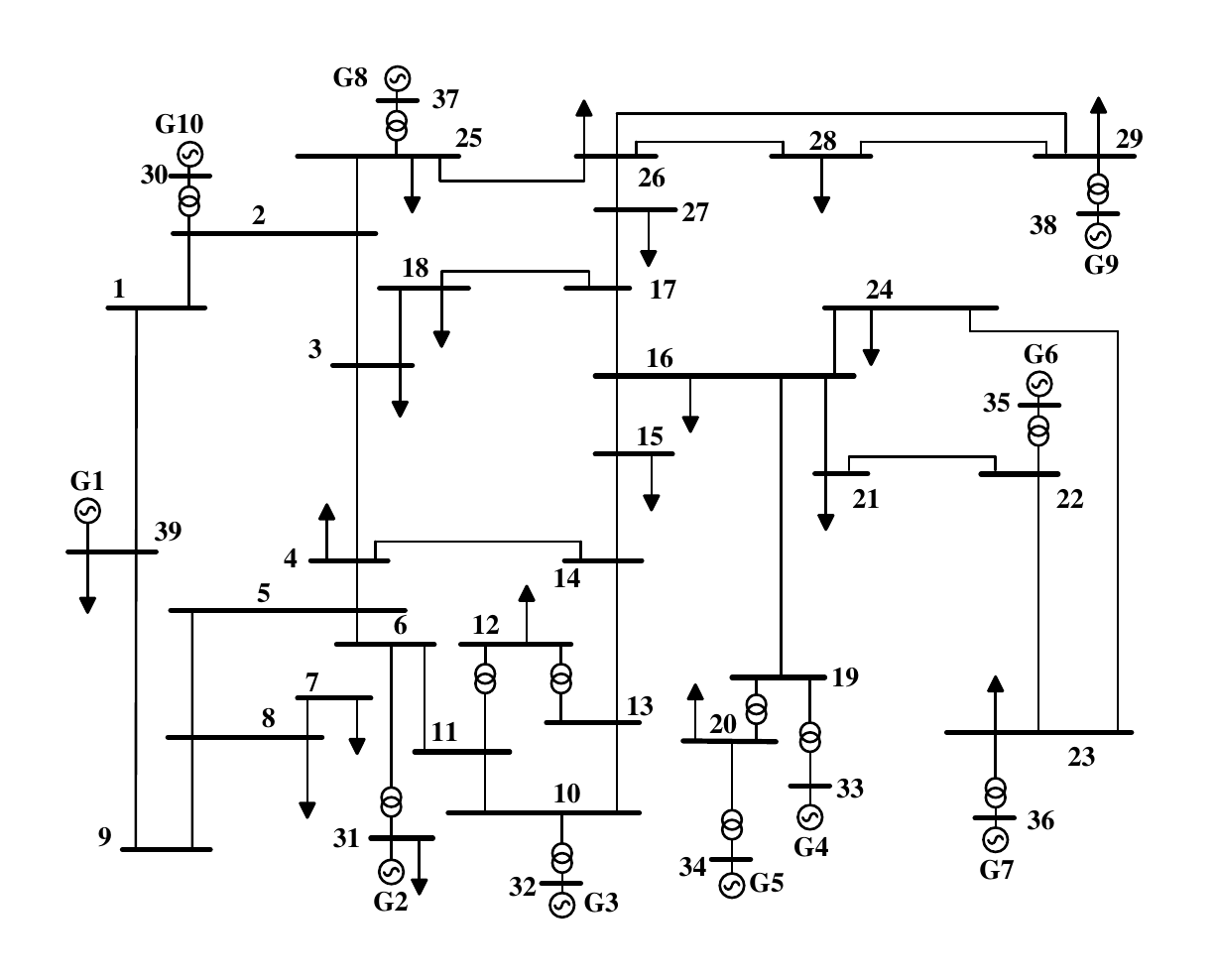}
		\caption{IEEE 39-bus benchmark system.}
		\label{fig:IEEE39}
	\end{figure}
	
	For simplicity, we consider the classical second-order generator model \cite{ref:kundur1994} in transmission networks. Note, however, our method is applicable to more general systems as long as they permit Lyapunov (energy) functions (see \cite{8890862} for example where Lyapunov functions for a power system with heterogeneous devices were proposed). For lossy model, a linear trajectory in the angle space is assumed to construct the path-dependent energy function, which is a common approximation in the literature \cite{ref:Athay_PracticalMethodDirect_1979,ref:Chiang:survey1995,ref:Chiang:2010}. The most commonly used energy function of this power system reads
	\begin{equation}\label{eq:EF}
	\begin{aligned}
	V(\tilde{\omega},\theta)&=\frac{1}{2}\sum_{i=1}^{n}M_i\tilde{\omega}_i^2-\sum_{i=1}^nP_i(\theta_i-\theta_i^s)\\
	&-\sum_{i=1}^{n-1}\sum_{j=i+1}^{n}E_iE_j\bigg[ B_{ij}\left( \cos(\theta_{ij})-\cos(\theta_{ij}^s)\right) \\
	&-G_{ij}\frac{\theta_i+\theta_j-\theta_i^s-\theta_j^s}{\theta_i-\theta_j-\theta_i^s+\theta_j^s}\left( \sin(\theta_{ij})-\sin(\theta_{ij}^s)\right) \bigg] 
	\end{aligned}
	\end{equation}
	For lossless model, the energy function is rigorous and is the same as \eqref{eq:EF} with $G_{ij}=0,~\forall i\neq j$. See \cite{ref:kundur1994,ref:Athay_PracticalMethodDirect_1979,ref:hiskens2013ieee} for more information about the system model and energy functions.
	
	In our simulation, 6 three-phase short circuit faults at 6 different buses are carried out (see Table \ref{tab:CCT_lossless} for faulted bus numbers). For each fault, we first execute the step-by-step (SBS) trajectory simulation to obtain the accurate CCTs and the real exist points, i.e., the intersecting points of the stability boundary and the fault-on trajectories. Then for each fault-on trajectory, we use the dynamic gradient method \cite{Scruggs_Dynamicgradientmethod_2001} to estimate the exit point and the shadowing method \cite{ref:shadowing} to locate the CUEP, based on which the PEBS and the BCU methods are carried out. As a result, we obtain the critical level value $V_{cr}$ for each fault, and consequently the estimated local stability region boundaries $\partial S=\left\lbrace  x\in\rr^n|\; V(x)=V_{cr} \right\rbrace$ and the estimated CCTs.
	Finally, we execute Algorithms \ref{al:1}-\ref{al:3} to improve the boundary and CCT estimations. The following subsections report our results.
	\subsection{ Results of Lossless Model}
	\subsubsection{Improve stability boundary estimation}
	We apply Algorithm \ref{al:1} to the level set $\partial S=\left\lbrace  x\in\rr^n|\; V(x)=V_{cr} \right\rbrace$ obtained from the BCU method and the PEBS method, respectively, with $h=0.2$, $M=9$, and scheme \eqref{eq:RK2}. Each fault-on trajectory intersects the stability boundary at the real exit point and intersects the estimation at another point. We use the Euclidean distance between those two points to measure the difference between the real stability boundary and its estimation, and hence to quantify the effectiveness of the expansion. 
	
	Fig. \ref{fig:dist_lossless} reports the results of six different faults and their average. It shows that our algorithm expands the boundary estimation monotonically towards the real stability boundary, which significantly reduces the conservativeness of both the BCU and the PEBS methods.
	\begin{figure}[htp]
		\centering
		\includegraphics[width=1\columnwidth]{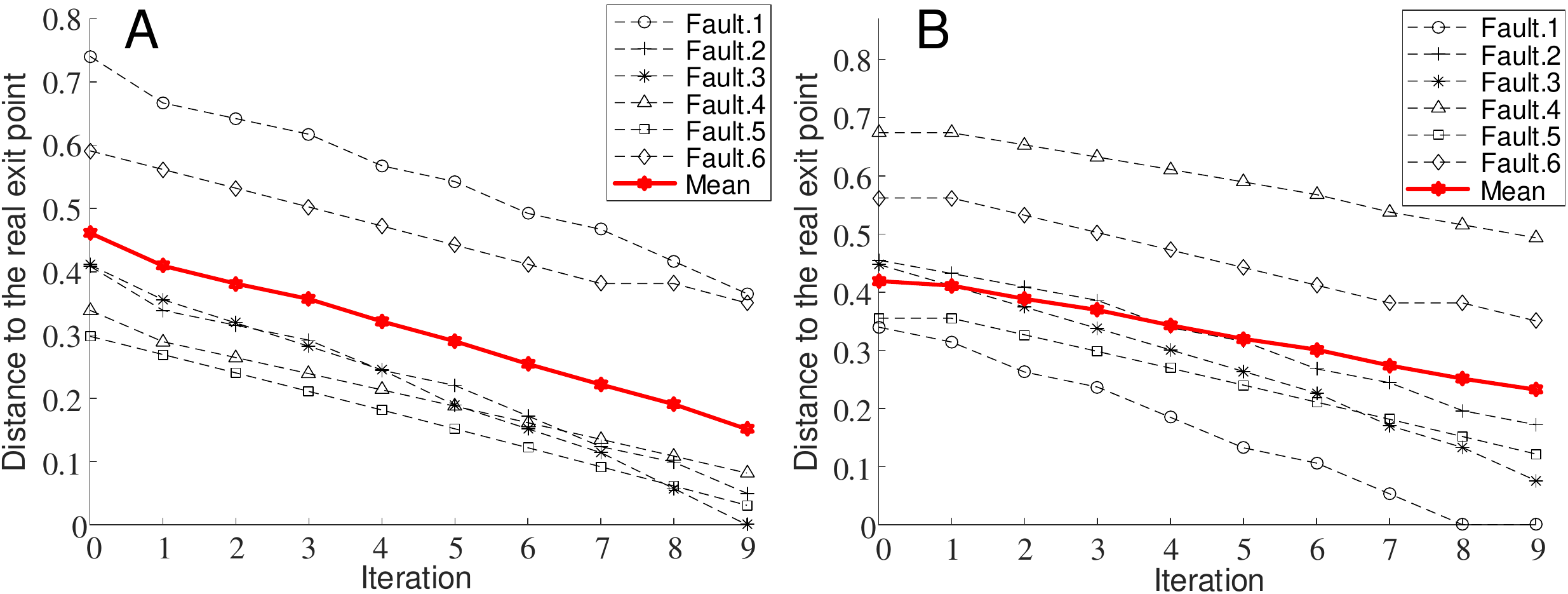}
		\caption{Lossless case: distance between the real exit point and the estimated ones resulting from (A) the BCU method and (B) the PEBS method.}
		\label{fig:dist_lossless}
	\end{figure}

	\subsubsection{Improve CCT estimation}
	We now use Algorithm \ref{al:3} to improve the CCT estimation obtained from the BCU method, with $h=0.2$ and $M=6$. As only numeric calculations are involved, we use the 3-order scheme \eqref{eq:RK3} in the case. Six iterations in total are carried out and the results of iteration 2, 4, and 6 are reported in Tab. \ref{tab:CCT_lossless}. 
	
	\begin{table*}[htb]
		\renewcommand{\arraystretch}{1.3}
		\centering
		\caption{\\ \textsc { Lossless Case: Improving CCT Estimation}}
		\label{tab:CCT_lossless}
		\begin{threeparttable}
	
		\begin{tabular}{|c|ccccccccccccccc|}
			\hline
			\multirow{4}{*}{\bfseries Fault} &
			\multicolumn{2}{c|}{\multirow{2}{*}{\bfseries Step-by-Step}} &
			\multicolumn{4}{c|}{\multirow{2}{*}{\bfseries BCU}} &
			\multicolumn{9}{c|}{\bfseries Expansion}\\
			\cline{8-16}
			& \multicolumn{2}{c|}{} & \multicolumn{4}{c|}{} & \multicolumn{3}{c|}{N=2} & \multicolumn{3}{c|}{N=4} & \multicolumn{3}{c|}{N=6} \\
			\cline{2-16}
			\bfseries Bus& \multirow{2}{*}{Vcr} & \multirow{2}{*}{CCT} & \multirow{2}{*}{Vcr} & \multirow{2}{*}{CCT} & \multirow{2}{*}{Error\tnote{$\intercal$}} & \multirow{2}{*}{T.C.\tnote{$\dagger$}}  & \multirow{2}{*}{CCT} & \multirow{2}{*}{Error} & T.C.\tnote{$\ddagger$} & \multirow{2}{*}{CCT} & \multirow{2}{*}{Error} & T.C. & \multirow{2}{*}{CCT} & \multirow{2}{*}{Error} & T.C.\\
			&&&&&(\%)&(s)&&(\%)&INC(s)&&(\%)&INC(s)&&(\%)&INC(s)\\
			\hline
			3&	23.42	&	0.669	&	16.15	&	0.519	&	-22.40	&	2.156	&	0.528	&	-21.15	&	0.047	&	0.532	&	-20.53	&	0.114	&	0.540	&	-19.28	&	0.269	\\
			\rowcolor{gray!20}9&	19.79	&	0.690	&	18.08	&	0.653	&	-5.40	&	1.376	&	0.662	&	-4.19	&	0.031	&	0.670	&	-2.99	&	0.114	&	0.674	&	-2.38	&	0.214	\\
			14&	20.37	&	0.874	&	18.08	&	0.804	&	-8.00	&	1.292	&	0.816	&	-6.57	&	0.041	&	0.829	&	-5.14	&	0.145	&	0.837	&	-4.19	&	0.280	\\
			\rowcolor{gray!20}20&	9.90	&	0.478	&	8.52	&	0.444	&	-7.10	&	0.479	&	0.452	&	-5.35	&	0.030	&	0.456	&	-4.48	&	0.094	&	0.460	&	-3.61	&	0.192	\\
			%			25&	20.45	&	0.452	&	10.15	&	0.323	&	-28.56	&	4.637	&	0.331	&	-26.72	&	0.043	&	0.331	&	-26.72	&	0.092	&	0.336	&	-25.79	&	0.208	\\
		31&	23.21	&	0.454	&	21.89	&	0.435	&	-4.03	&	1.307	&	0.444	&	-2.19	&	0.033	&	0.448	&	-1.27	&	0.098	&	0.452	&	-0.35	&	0.199	\\
			\rowcolor{gray!20}39&	20.46	&	0.628	&	18.01	&	0.587	&	-6.44	&	1.618	&	0.595	&	-5.11	&	0.020	&	0.600	&	-4.45	&	0.065	&	0.604	&	-3.78	&	0.135	\\

			\hline
			AVE & \multicolumn{4}{c}{} & -8.89 &1.37 & &-7.43& 0.03 & &-6.48& 0.11 & &-5.60 &0.21\\
			
			VAR & \multicolumn{4}{c}{} & 6.17 &0.50 & &6.28& 0.01 & &6.41& 0.02 & &6.25 &0.05\\
			\hline
		\end{tabular}
	\begin{tablenotes}
		\item[$\intercal$] The relative error of CCT estimation, as defined in \eqref{eq:error}.
		\item[$\dagger$] The time consumption of the BCU method.
		\item[$\ddagger$] The time consumption increase caused by the expansion algorithm.
	\end{tablenotes}
		\end{threeparttable}
	\end{table*}
	To demonstrate the effectiveness of our algorithm, we calculate the relative error of CCT estimation, as defined in \eqref{eq:error}, and record the time consumption increase (T.C. INC) of every expansion process.
	\begin{equation}\label{eq:error}
	\text{Error}=\frac{\text{Estimated CCT}-\text{Real CCT}}{\text{Real CCT}}\times 100\%
	\end{equation}
	
	Tab. \ref{tab:CCT_lossless} also reports the arithmetic averages and the standard deviations of the Error and the T.C. INC to illustrate the average performance. It shows that the algorithm monotonically improved the estimation, which is consistent with our theories. On average, six expansions reduce the estimation error by 3.22\% and increase the time consumption by 0.21s, which is 11.41\% of the BCU method.
	It is also noticed that the time consumption of the expansion process varies little for different faults while the time consumption of the BCU method is highly relevant to faults.
	\subsection{ Results of Lossy Model}
	This subsection studies the expansion method in the lossy case, where the energy function is inaccurate. Simulation results show that the expansion method is still effective in this case, while some fluctuation in the process is observed.
	\subsubsection{Improve stability boundary estimation}
	We apply Algorithm \ref{al:1} to the boundary estimations obtain from the BCU method and the PEBS method. All simulation settings are the same as in the lossless case. 	
	\begin{figure}[htp]
		\centering
		\includegraphics[width=1\columnwidth]{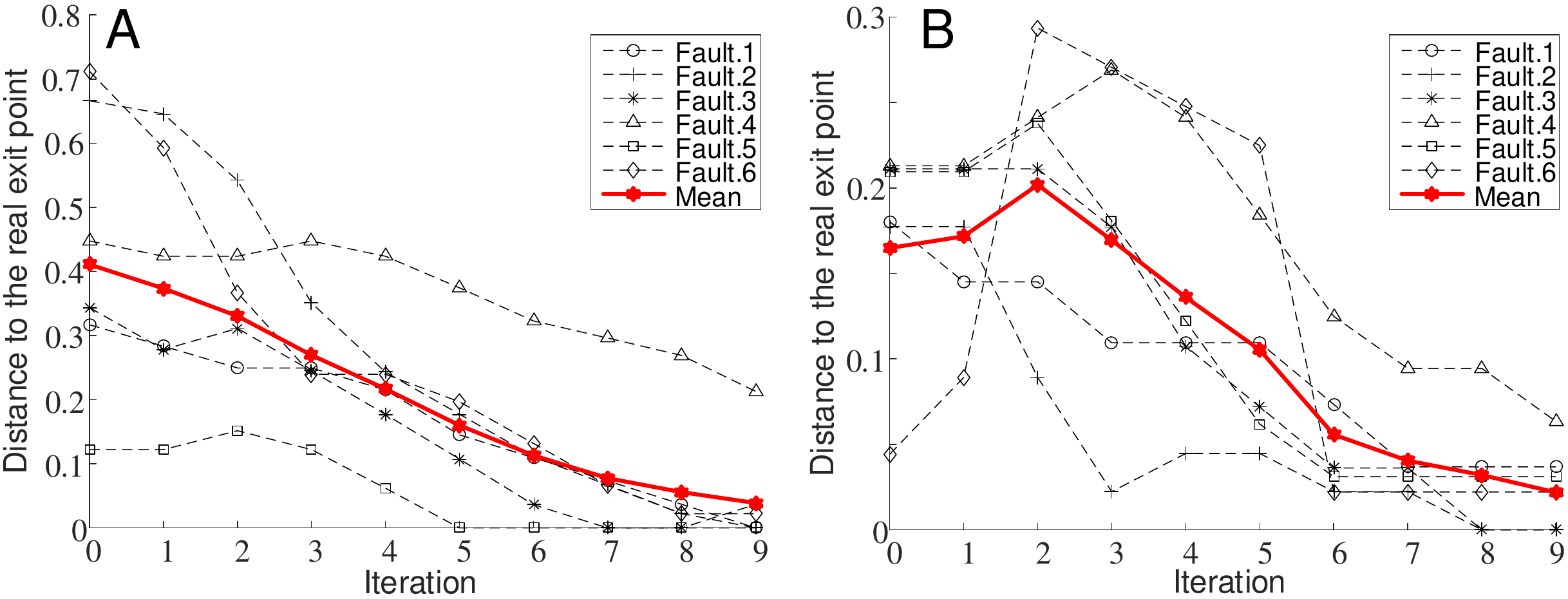}
		\caption{Lossy case: distance between the real exit point and the estimated ones resulting from (A) the BCU method and (B) the PEBS method.}
		\label{fig:dist_lossy}
	\end{figure}
	
	Fig. \ref{fig:dist_lossy} reports the expansion results. The distance indicators fluctuate in the first few iterations, rather than a monotonic improvement as the lossless case, which is a consequence of the inaccurate energy function. Despite that, the distances gradually shrink and tend to zero, which indicates that the estimated boundaries are expanded and converge to the real ones. %Generally, the distance reduces faster for the first few iterations and get slower as the number of iteration increase.
	It is also noticed that fluctuations in the PEBS case are more violent than the BCU case, possibly because the initial PEBS result is more close to the real boundary, and hence it is more sensitive for the inaccuracy of energy function and the computational error. %Mathematically, for a fix step size $t$, the speed of expansion is determined by the norm of the vector field, namely $\lVert f(x)\rVert$ of \eqref{eq:system.1}. %The large jump for the first iteration results from a large $\lVert f(x)\rVert$ at the initial surface $\partial S$. The norm of the vector field at every hitting point on the expanded boundary is displayed in Fig \ref{fig:4}, where the same trend can be seen obviously. 
	
	\subsubsection{Improve CCT estimation}	
	Algorithm \ref{al:3} combined with the BCU method is also implemented for the lossy case to improve the CCT estimation. The results of iteration 2, 4, and 6 are displayed in Tab. \ref{tab:CCT}. 
	
	\begin{table*}[htb]
		\renewcommand{\arraystretch}{1.3}
		\centering
		\caption{\\ \textsc { Lossy Case: Improving CCT Estimation}}
		\label{tab:CCT}
		\begin{threeparttable}
		\begin{tabular}{|c|ccccccccccccccc|}
			\hline
			\multirow{4}{*}{\bfseries Fault} &
			\multicolumn{2}{c|}{\multirow{2}{*}{\bfseries Step-by-Step}} &
			\multicolumn{4}{c|}{\multirow{2}{*}{\bfseries BCU}} &
			\multicolumn{9}{c|}{\bfseries Expansion}\\
			\cline{8-16}
			& \multicolumn{2}{c|}{} & \multicolumn{4}{c|}{} & \multicolumn{3}{c|}{N=2} & \multicolumn{3}{c|}{N=4} & \multicolumn{3}{c|}{N=6} \\
			\cline{2-16}
			\bfseries Bus& \multirow{2}{*}{Vcr} & \multirow{2}{*}{CCT} & \multirow{2}{*}{Vcr} & \multirow{2}{*}{CCT} & \multirow{2}{*}{Error\tnote{$\intercal$}} & \multirow{2}{*}{T.C.\tnote{$\dagger$}}  & \multirow{2}{*}{CCT} & \multirow{2}{*}{Error} & T.C.\tnote{$\ddagger$} & \multirow{2}{*}{CCT} & \multirow{2}{*}{Error} & T.C. & \multirow{2}{*}{CCT} & \multirow{2}{*}{Error} & T.C.\\
			&&&&&(\%)&(s)&&(\%)&INC(s)&&(\%)&INC(s)&&(\%)&INC(s)\\
			\hline
			3&	13.18	&	0.281	&	9.60	&	0.239	&	-14.89	&	1.653	&	0.248	&	-11.93	&	0.013	&	0.252	&	-10.45	&	0.030	&	0.264	&	-6.00	&	0.063	\\
			\rowcolor{gray!20}9&	13.16	&	0.713	&	9.60	&	0.578	&	-18.94	&	0.114	&	0.603	&	-15.44	&	0.015	&	0.661	&	-7.25	&	0.052	&	0.686	&	-3.75	&	0.085	\\
			14&	13.16	&	0.287	&	9.60	&	0.239	&	-16.73	&	0.746	&	0.243	&	-15.28	&	0.007	&	0.260	&	-9.47	&	0.028	&	0.277	&	-3.66	&	0.060	\\
			\rowcolor{gray!20}20&	6.51	&	0.221	&	3.08	&	0.147	&	-33.65	&	0.064	&	0.151	&	-31.77	&	0.009	&	0.151	&	-31.77	&	0.025	&	0.168	&	-24.23	&	0.067	\\
			%		25&	13.19	&	0.242	&	9.78	&	0.205	&	-15.37	&	0.219	&	0.205	&	-15.37	&	0.008	&	0.218	&	-10.21	&	0.038	&	0.226	&	-6.77	&	0.075	\\
			31&	8.72	&	0.284	&	7.41	&	0.259	&	-8.64	&	0.097	&	0.255	&	-10.11	&	0.005	&	0.268	&	-5.71	&	0.029	&	0.276	&	-2.77	&	0.057	\\
			\rowcolor{gray!20}39&	13.85	&	0.844	&	9.60	&	0.696	&	-17.59	&	0.085	&	0.767	&	-9.21	&	0.035	&	0.792	&	-6.24	&	0.056	&	0.813	&	-3.78	&	0.089	\\
			
			\hline
			AVE & \multicolumn{4}{c}{} & -18.41 &0.46 & &-15.62& 0.01 & &-11.82& 0.04 & &-7.37 &0.07\\
			VAR & \multicolumn{4}{c}{} & 7.57 &0.59 & &7.59& 0.01 & &9.08& 0.01 & &7.61 &0.01\\
			\hline
		\end{tabular}
		\begin{tablenotes}
		\item[$\intercal$] The relative error of CCT estimation, as defined in \eqref{eq:error}.
		\item[$\dagger$] The time consumption of the BCU method.
		\item[$\ddagger$] The time consumption increase caused by the expansion algorithm.
	\end{tablenotes}
\end{threeparttable}
	\end{table*}

	It shows that the algorithm effectively improves the CCT estimation for every fault, verifying its capacity in lossy system models. On average, six expansions reduce the estimation error by 10.69\% while increase the time consumption by only 0.07s, which is 16.18\% of the BCU method.
	
	Taking Fault No.1 for example, the process of Algorithm \ref{al:3} is demonstrated in Fig. \ref{fig:al3} to explain its mechanism. The BCU method first provides a $V_{cr}$ and a CCT estimation. Then Algorithm \ref{al:3} is applied, resulting in an iterative expansion process, which is similar to the well-known shadowing method \cite{ref:shadowing}. After six expansions, a much more accurate CCT estimation is produced. We remark that accuracy is at the expense of additional time consumption. The basic idea of Algorithm \ref{al:3} is to improve the estimation by only necessary numeric calculations such that time consumption increase is reduced as much as possible.
	
	\begin{figure}[htp]
		\centering
		\includegraphics[width=.8\columnwidth]{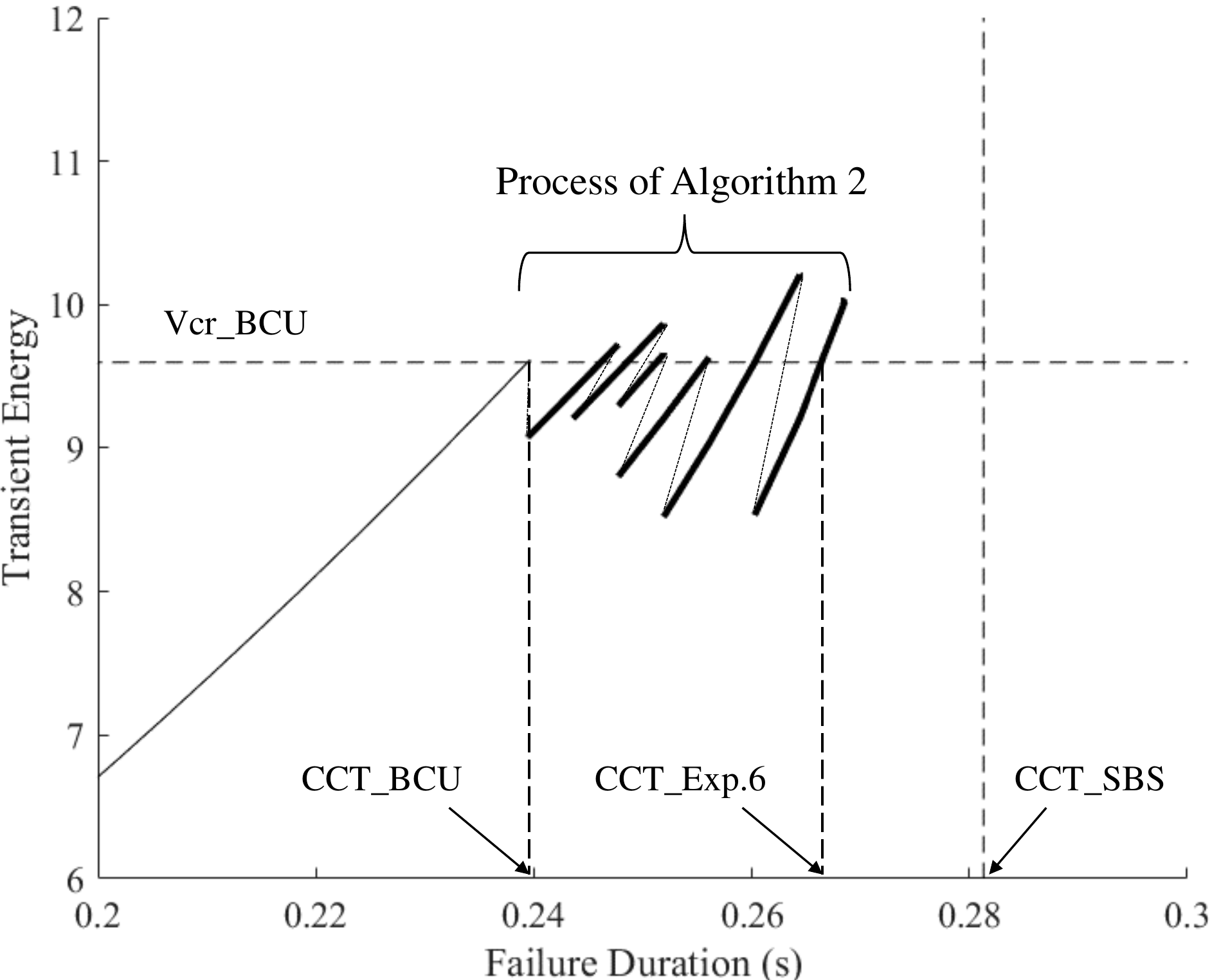}
		\caption{The process of Algorithm \ref{al:3} in Fault No.1 of the lossy case.}
		\label{fig:al3}
	\end{figure}

	In summary, all the simulation results lead to the following conclusions. First, for the lossless case, the accuracy of every estimation is improved monotonically by our algorithms, which validates our expansion theories. Second, for the lossy case, despite some fluctuations during the expansion process mainly due to the inaccurate energy function, the algorithms are able to produce better estimations eventually. This verifies the feasibility and effectiveness of our algorithms for application in lossy systems. Third, in terms of time consumption, Algorithm \ref{al:3} can improve the CCT estimation at a fast speed, which on average can provide 10\% accuracy improvement with just 16\% additional time.

	%	The simulation is carried out by MATLAB R2015b in a PC with Intel i5-7300HQ CPU @2.50GHz @2.50GHz.
	\section{ Concluding Remarks}
	
	In this paper, we have established the theoretic foundation for the expansion methodology via the flow mapping, especially for the local case where the initial guess is only partly contained in the stability region. By leveraging the diffeomorphism of the flow mapping, we have characterized the relation among the initial level set, the expanded level sets, and the limit set under the flow mapping, showing the exact stability boundary can be approached.   
	
	For practical implementation, we have proposed algorithms to improve the stability boundary and the CCT estimations given by direct methods such as the BCU and the PEBS methods. Case studies on the IEEE 39-bus power system benchmark have verified the effectiveness of our algorithms and have shown that remarkable improvement can be achieved with modest computation costs.
	
	It is noticed that our expansion method is still effective even with an inaccurate initial energy function as in the lossy case. This inspires us to further exploit the valid scope of the expansion methodology into systems where only approximate or even no Lyapunov (energy) function is provided.
	\bibliographystyle{IEEEtran}
	\bibliography{mybib}

% Generated by IEEEtran.bst, version: 1.14 (2015/08/26)
\begin{thebibliography}{10}
\providecommand{\url}[1]{#1}
\csname url@samestyle\endcsname
\providecommand{\newblock}{\relax}
\providecommand{\bibinfo}[2]{#2}
\providecommand{\BIBentrySTDinterwordspacing}{\spaceskip=0pt\relax}
\providecommand{\BIBentryALTinterwordstretchfactor}{4}
\providecommand{\BIBentryALTinterwordspacing}{\spaceskip=\fontdimen2\font plus
\BIBentryALTinterwordstretchfactor\fontdimen3\font minus
  \fontdimen4\font\relax}
\providecommand{\BIBforeignlanguage}[2]{{%
\expandafter\ifx\csname l@#1\endcsname\relax
\typeout{** WARNING: IEEEtran.bst: No hyphenation pattern has been}%
\typeout{** loaded for the language `#1'. Using the pattern for}%
\typeout{** the default language instead.}%
\else
\language=\csname l@#1\endcsname
\fi
#2}}
\providecommand{\BIBdecl}{\relax}
\BIBdecl

\bibitem{Kundur_Definitionclassificationpower_2004}
P.~Kundur, J.~Paserba, V.~Ajjarapu, G.~Andersson, A.~Bose, C.~Canizares,
  N.~Hatziargyriou, D.~Hill, A.~Stankovic, C.~Taylor \emph{et~al.},
  ``Definition and classification of power system stability ieee/cigre joint
  task force on stability terms and definitions,'' \emph{IEEE Trans. Power
  Syst.}, vol.~19, no.~3, pp. 1387--1401, 2004.

\bibitem{7419922}
Y.~Zhang and L.~Xie, ``A transient stability assessment framework in power
  electronic-interfaced distribution systems,'' \emph{IEEE Trans. Power Syst.},
  vol.~31, no.~6, pp. 5106--5114, 2016.

\bibitem{ref:Chiang:2010}
H.~D. Chiang, \emph{Direct methods for Power System Stability Analysis:
  Fundamental theory, BCU Methodologies and Applications}.\hskip 1em plus 0.5em
  minus 0.4em\relax John Wiley and IEEE Press, 2010.

\bibitem{8444083}
Z.~Shuai, C.~Shen, X.~Liu, Z.~Li, and Z.~J. Shen, ``Transient angle stability
  of virtual synchronous generators using lyapunov’s direct method,''
  \emph{IEEE Trans. Smart Grid}, vol.~10, no.~4, pp. 4648--4661, 2019.

\bibitem{huang2021neural}
T.~Huang, S.~Gao, X.~Long, and L.~Xie, ``A neural lyapunov approach to
  transient stability assessment in interconnected microgrids,'' in
  \emph{Proceedings of the 54th Hawaii International Conference on System
  Sciences}, 2021, p. 3330.

\bibitem{8791570}
C.~Josz, D.~K. Molzahn, M.~Tacchi, and S.~Sojoudi, ``Transient stability
  analysis of power systems via occupation measures,'' in \emph{2019 ISGT},
  2019, pp. 1--5.

\bibitem{Stott_Powersystemdynamic_1979}
B.~Stott, ``Power system dynamic response calculations,'' \emph{Proceedings of
  the IEEE}, vol.~67, no.~2, pp. 219--241, 1979.

\bibitem{ref:Gless:1966}
G.~E. Gless, ``Direct method of lyapunov applied to transient power system
  stability,'' \emph{IEEE Trans. Power Apparatus and Syst.}, vol.~85, no.~4,
  pp. 159--168, Feb. 1966.

\bibitem{ref:Chiang:1989}
H.~D. Chiang and J.~S. Thorp, ``The closest unstable equilibrium point method
  for power system dynamic security assessment,'' \emph{IEEE Trans. Circuits
  and Syst.}, vol.~36, no.~5, pp. 1187--1199, Dec. 1989.

\bibitem{ref:PEBS1}
N.~Kakimoto, Y.~Ohnogi, H.~Matsuda, and H.~Shibuya, ``Transient stability
  analysis of large-scale power system by lyapunov's direct method,''
  \emph{IEEE Trans. Power Apparatus and Systems}, vol. PAS-103, no.~1, pp.
  160--167, Jan 1984.

\bibitem{ref:PEBS2}
------, ``Transient stability analysis of electric power system via lure-type
  lyapunov function part i and ii,'' \emph{IEE of Japan}, vol.~98, no.~1, p.
  516, 1978.

\bibitem{Chiang_Foundationspotentialenergy_1988}
H.-D. Chiang, F.~F. Wu, and P.~P. Varaiya, ``Foundations of the potential
  energy boundary surface method for power system transient stability
  analysis,'' \emph{IEEE Trans. Circuits Syst.}, vol.~35, no.~6, pp. 712--728,
  1988.

\bibitem{Chiang_BCUmethoddirect_1994}
------, ``A bcu method for direct analysis of power system transient
  stability,'' \emph{IEEE Trans. Power Syst.}, vol.~9, no.~3, pp. 1194--1208,
  1994.

\bibitem{Chiang_TheoreticalfoundationBCU_1995}
H.-D. Chiang and C.-C. Chu, ``Theoretical foundation of the bcu method for
  direct stability analysis of network-reduction power system. models with
  small transfer conductances,'' \emph{IEEE Trans. Circuits Syst. I:
  Fundamental Theory and Applications}, vol.~42, no.~5, pp. 252--265, 1995.

\bibitem{Alberto_Directmethodstransient_2001}
L.~F. Alberto, F.~H. Silva, and N.~Bretas, ``Direct methods for transient
  stability analysis in power systems: state of art and future perspectives,''
  in \emph{2001 IEEE Porto Power Tech Proceedings (Cat. No. 01EX502)},
  vol.~2.\hskip 1em plus 0.5em minus 0.4em\relax IEEE, 2001, pp. 6--pp.

\bibitem{ref:Genesio:1985}
R.~Genesio, M.~Tartaglia, and A.~Vicino, ``On the estimation of asymptotic
  stability regions: state of the art and new proposals,'' \emph{IEEE Trans.
  Autom. Control}, vol.~30, no.~8, pp. 747--755, Aug 1985.

\bibitem{Chiang_constructivemethoddirect_1988}
H.-D. Chiang, B.-Y. Ku, and J.~Thorp, ``A constructive method for direct
  analysis of transient stability,'' in \emph{Proceedings of the 27th IEEE
  Conference on Decision and Control}.\hskip 1em plus 0.5em minus 0.4em\relax
  IEEE, 1988, pp. 684--689.

\bibitem{ref:Chiang:1989-2}
H.~D. Chiang and J.~S. Thorp, ``Stability regions of nonlinear dynamical
  systems: a constructive methodology,'' \emph{IEEE Trans. Autom. Control},
  vol.~34, no.~12, pp. 1229--1341, Dec 1989.

\bibitem{ref:Chiang:1990}
H.~D. Chiang and L.~F. Ahmed, ``A constructive methodology for estimating the
  stability regions of interconnected nonlinear systems,'' \emph{IEEE Trans.
  Circuits and Syst.}, vol.~37, no.~5, pp. 577--588, May 1990.

\bibitem{ref:Liu:2011}
F.~Liu, W.~Wei, and S.~Mei, ``On the expansion of stability region estimation:
  Theory, methodology, and application to power systems,'' \emph{Science China
  Technological Sciences}, vol.~54, no.~6, pp. 1394--1406, Jun 2011.

\bibitem{Liu_estimationstabilityboundaries_2011}
------, ``On the estimation of stability boundaries of nonlinear dynamic
  systems,'' in \emph{2011 Chinese Control and Decision Conference
  (CCDC)}.\hskip 1em plus 0.5em minus 0.4em\relax IEEE, 2011, pp. 1465--1472.

\bibitem{Pitarch_ClosedFormEstimatesDomain_2014}
J.~L. Pitarch, A.~Sala, and C.~V. Arino, ``Closed-form estimates of the domain
  of attraction for nonlinear systems via fuzzy-polynomial models,'' \emph{IEEE
  Trans. Cybern.}, vol.~44, no.~4, pp. 526--538, 2013.

\bibitem{Doban_ComputationLyapunovFunctions_2018}
A.~I. Doban and M.~Lazar, ``Computation of lyapunov functions for nonlinear
  differential equations via a massera-type construction,'' \emph{IEEE Trans.
  Autom. Control}, vol.~63, no.~5, pp. 1259--1272, 2017.

\bibitem{ref:Zaborszky:1988}
J.~Zaborszky, G.~Huang, B.~Zheng, and T.-C. Leung, ``On the phase portrait of a
  class of large nonlinear dynamic systems such as the power system,''
  \emph{IEEE Trans. Autom. Control}, vol.~33, no.~1, pp. 4--15, Jan. 1988.

\bibitem{ref:Chiang:1988}
H.~D. Chiang, M.~Hirsch, and F.~F. Wu, ``Stability regions of nonlinear
  autonomous dynamical systems,'' \emph{IEEE Trans. Autom. Control}, vol.~33,
  no.~1, pp. 6--27, Jan. 1988.

\bibitem{ref:Chiang:survey1995}
H.~D. Chiang, C.~C. Chu, and G.~Cauley, ``Direct stability analysis of electric
  power systems using energy functions: Theory, applications, and
  perspective,'' \emph{Proceedings of IEEE}, vol.~83, no.~11, pp. 1497--1529,
  Nov 1995.

\bibitem{fisher2020hausdorff}
M.~W. Fisher and I.~A. Hiskens, ``Hausdorff continuity of region of attraction
  boundary under parameter variation with application to disturbance
  recovery,'' \emph{arXiv preprint arXiv:2010.03312}, 2020.

\bibitem{fisher2021comments}
M.~W. Fisher and I.~Hiskens, ``Comments on" stability regions of nonlinear
  autonomous dynamical systems",'' \emph{IEEE Trans. Autom. Control}, 2021.

\bibitem{Khalil_NonlinearSystems_2002}
H.~K. Khalil, \emph{Nonlinear systems}.\hskip 1em plus 0.5em minus 0.4em\relax
  Patience Hall, 2002.

\bibitem{Chesi_DomainAttractionAnalysis_2011}
G.~Chesi, \emph{Domain of {{Attraction}}: {{Analysis}} and {{Control}} via
  {{SOS Programming}}}, ser. Lecture Notes in Control and Information
  Sciences.\hskip 1em plus 0.5em minus 0.4em\relax {Springer-Verlag}, 2011.

\bibitem{iserles_2008}
A.~Iserles, \emph{A First Course in the Numerical Analysis of Differential
  Equations}, 2nd~ed., ser. Cambridge Texts in Applied Mathematics.\hskip 1em
  plus 0.5em minus 0.4em\relax Cambridge University Press, 2008.

\bibitem{butcher1963coefficients}
J.~C. Butcher, ``Coefficients for the study of runge-kutta integration
  processes,'' \emph{Journal of the Australian Mathematical Society}, vol.~3,
  no.~2, pp. 185--201, 1963.

\bibitem{ref:hiskens2013ieee}
I.~Hiskens, ``{{IEEE PES Task Force}} on {{Benchmark Systems}} for {{Stability
  Controls}},'' \emph{Technical Report}, 2013.

\bibitem{ref:kundur1994}
P.~Kundur, N.~J. Balu, and M.~G. Lauby, \emph{Power system stability and
  control}.\hskip 1em plus 0.5em minus 0.4em\relax McGraw-hill New York, 1994,
  vol.~7.

\bibitem{8890862}
P.~Yang, F.~Liu, Z.~Wang, and C.~Shen, ``Distributed stability conditions for
  power systems with heterogeneous nonlinear bus dynamics,'' \emph{IEEE Trans.
  Power Syst.}, vol.~35, no.~3, pp. 2313--2324, 2020.

\bibitem{ref:Athay_PracticalMethodDirect_1979}
T.~Athay, R.~Podmore, and S.~Virmani, ``A practical method for the direct
  analysis of transient stability,'' \emph{IEEE Trans. Power Apparatus and
  Syst.}, no.~2, pp. 573--584, 1979.

\bibitem{Scruggs_Dynamicgradientmethod_2001}
J.~Scruggs and L.~Mili, ``Dynamic gradient method for pebs detection in power
  system transient stability assessment,'' \emph{Int. J. Electr. Power Energy
  Syst.}, vol.~23, no.~2, pp. 155--165, 2001.

\bibitem{ref:shadowing}
R.~T. Treinen, V.~Vittal, and W.~Kliemann, ``An improved technique to determine
  the controlling unstable equilibrium point in a power system,'' \emph{IEEE
  Trans. Circuits and Syst. I: Fundamental Theory and Applications}, vol.~43,
  no.~4, pp. 313--323, 1996.

\bibitem{Kelley_TheoryDifferentialEquations_2010}
W.~G. Kelley and A.~C. Peterson, \emph{The theory of differential equations:
  classical and qualitative}.\hskip 1em plus 0.5em minus 0.4em\relax Springer
  Science \& Business Media, 2010.

\end{thebibliography}
	
	\appendices
	\makeatletter
	\@addtoreset{equation}{section}
	\@addtoreset{theorem}{section}
	\makeatother
	\renewcommand{\theequation}{A.\arabic{equation}}
	\renewcommand{\thetheorem}{A.\arabic{theorem}}
	\section{Proofs}
	Consider the flow mapping of \eqref{eq:system.1}. Without causing ambiguity in the context, we use $\phi^f_{t} \circ x$ and $\phi^f_{t}(x)$ exchangeably in the Appendix. The following property of the flow mapping will be frequently used in the proofs.
	\begin{property} \label{Prop:1}
		\cite{Kelley_TheoryDifferentialEquations_2010} The flow mapping of system \eqref{eq:system.1}, $\phi^f_t: \rr^n\to\rr^n$, has the following properties:
		\begin{enumerate}
			\item For any given $t\in \rr $, it is a diffeomorphic mapping. 	
			\item It defines an additive group with an ``addition" operation ``$\circ$'' on $\rr$, i.e, for any $x\in\rr^n$	and any $t, \tau\in \rr$, there are
			\begin{equation*}
			\begin{aligned}
			&\phi^f_t \circ \phi^f_{\tau} \circ x =\phi^f_{t+\tau} \circ x=\phi^f_{\tau} \circ \phi^f_t \circ x\\
			&\phi^f_t \circ   \phi^f_{-t} \circ x =\phi^f_0 \circ x=x		
			\end{aligned}
			\end{equation*}
		\end{enumerate}
	\end{property}
	\subsection{Proof of Lemma \ref{lem:expansion.1}}
	\begin{proof}
		$\Rightarrow 1)$. According to Property \ref{Prop:1}$, \phi^f_{t}(x)$ is a diffeomorphic map. Hence, $\phi^f_{-t}(\Omega)$ is connected since $\Omega$ is connected. For the positive invariance, we only need to prove 
		$\phi _{t}^{f} \circ \phi _{-t}^{f} (\Omega )\subseteq \phi _{-t}^{f}(\Omega)$ for $\forall t\in (0,+\infty)$. 	Since $\Omega $ is a positively invariant set, it follows that $\phi _{t}^{f} (\Omega )\subseteq \Omega, \; \forall t\in (0,\,+\infty)$, where ``$=$" holds only for $\Omega = \{x_s\}$. By Property \ref{Prop:1}, it holds that 
		\begin{equation*}
		\phi _{t}^{f} (\Omega )\subseteq \Omega
		\;\;\Rightarrow\;\; \phi _{-t}^{f} \circ \phi _{t}^{f} (\Omega )\subseteq \phi _{-t}^{f}(\Omega) \;\;\Rightarrow\;\; \Omega \subseteq \phi _{-t}^{f}(\Omega).
		\end{equation*}
		%		Hence 1) is proved. 	
		
		$\Rightarrow 2)$. Since $x_s \in \text{Int} (\Omega)$, it holds that $x_s \in \Omega$ and $\{x_s\} \ne \Omega$. Thus, $\phi _{t}^{f} (\Omega )\subset \Omega$ holds for all $t\in (0,\,+\infty)$. Then evolving along the inverse time, we have $\Omega \subset \phi _{-t}^{f} (\Omega ),\; \forall t\in (0,\,+\infty)$. By the definition of $A_s(x)$, it holds that $\phi^f_{-t} (\Omega) \subset A(x_s), \; \forall t\in (0,\,+\infty)$.
		%		Hence 2) is proved.
		
		$\Rightarrow 3)$. It is a direct consequence of 2) by simply replacing $\Omega$ in 2) with $\phi^f_{-t_1} (\Omega)$, and $t$ with  $(t_2-t_1)$. 
		
		$\Rightarrow 4)$. By 3), the sequence of sets $\{\phi _{-t}^{f}(\Omega)\}_t$ is monotonically increasing as $t\to+\infty$ and is bounded by $A(x_s)$. This implies the existence of the limit set and yields $$\lim _{t\to +\infty } \phi _{-t}^{f} (\Omega )=\bigcup_{t>0}\phi _{-t}^{f} (\Omega )$$ 
		By 2) we have $\lim _{t\to +\infty } \phi _{-t}^{f} (\Omega )\subseteq A(x_{s} )$. We now prove $\forall x\in A(x_{s} )$, $\exists \tau>0$ such that $x\in\phi _{-\tau}^{f} (\Omega )$.
		
		Noting that $x_{s} \in \text{Int}(\Omega )$, we can always find a neighborhood $B(x_{s} )$ of $x_{s} $ such that $B(x_{s} )\subset \Omega $. Then, for any $x\in A(x_{s} )$, by definition \eqref{eq:As} there exists a sufficiently large $\tau>0 $ such that $\phi _{\tau}^{f} (x)\in B(x_{s} )$, and hence $\phi _{\tau}^{f} (x)\in \Omega $. Therefore, we have $x\in \phi _{-\tau}^{f} (\Omega )\subset\lim _{t\to +\infty } \phi _{-t}^{f} (\Omega )$. By the arbitrary of $x$, we have $A(x_{s} )\subseteq\lim _{t\to +\infty } \phi _{-t}^{f} (\Omega )$, which completes the proof.
	\end{proof}
	\subsection{Proof of Lemma \ref{le:boundary.2}}
	\begin{proof}%\peng{new. Actually, 1) relies only on diffeomorphic mapping}
		%Because $\Omega $ is connected and positively invariant on system \eqref{eq:system.1}, and $\phi^f_t$ is a diffeomorphic operator for any $t>0$, then $\phi _{-t}^{f} (\Omega )$ is also a connected and positively invariant set on 
		
		$\Rightarrow 1)$.We first prove $\partial (\phi _{-t}^{f} (\Omega )) \subseteq \phi _{-t}^{f} (\partial \Omega )$.
		
		We assume for the purpose of contradiction that there exist $x_0 \notin \partial \Omega$ and $\phi^f_{-t}(x_0) \in \partial (\phi^f_{-t}(\Omega))$ for some finite $t>0$. 
		
		Since $x_0 \notin \partial \Omega$, there are two possibilities for $x_0$: i) $x_0 \in \text{Int} (\Omega)$; and 2) $x_0 \notin \overline \Omega$, where $\overline \Omega$ represents the closure of set $\Omega$. We will show neither of these can be true. 
		
		i) $x_0 \in \text{Int} (\Omega)$: 
		Since $\phi^f_{-t}(x)$ is a diffeomorphic map, it follows that $\phi^f_{-t}(x_0)\in\text{Int} (\phi^f_{-t}(\Omega))$. Hence, $\phi^f_{-t}(x_0)\notin\partial\phi^f_{-t}(\Omega)$, which is a contradiction.
		%		In terms of Lemma \ref{lem:expansion.1}, we have $\Omega \subset \phi^f_{-t}(\Omega)$. Since $x_0 \in \text{Int}(\Omega)$, there must exist some finite $\tau \;(0< \tau < t)$ such that $\phi^f_{-\tau} (x_0) = x_1 \in \partial \Omega$ (see Fig. \ref{fig.1} (a)). Denote $x'_1=\phi^f_{-t}(x_1)$. According to Property \ref{Prop:1},  we have
		%		$$x'_1=\phi^f_{-t}(x_1)=\phi^f_{-\tau}\circ \phi^f_{-t}(x_0)=\phi^f_{-\tau}(x'_0)$$
		%		
		%		It implies that an arbitrary point on the trajectory between $x_0$ and $x_1$, denoted by $x_2$, will go out of $\phi^f_{-t}(\Omega)$ under the operation of flow mapping $\phi^f_{-t}$. As a consequence, $\phi^f_{-t}(x_2) \notin \phi^f_{-t}(\Omega)$. However, since $x_2 \in \Omega $, Lemma \ref{lem:expansion.1} indicates that $\phi^f_{-t}(x_2) \in \phi^f_{-t}(\Omega)$ holds for any $x\in \Omega$. By contradiction, it is impossible that $x_0 \in \text{Int}(\Omega)$.  
		
		ii) $x_0 \notin \overline \Omega$: 
		In this case, $x_0\in\overline{\Omega}^c$ and $\phi^f_{-t}(x_0)\in\phi^f_{-t}(\overline{\Omega}^c)$. Since $\overline{\Omega}^c$ is open and $\phi^f_{-t}(x)$ is diffeomorphic, we have $\phi^f_{-t}(\overline{\Omega}^c)$ is open. Since $\phi^f_{-t}(x_0)\in\partial (\phi^f_{-t}(\Omega))$, there exists a point $y$ in the neighborhood of $\phi^f_{-t}(x_0)$ such that $y\in\phi^f_{-t}(\overline{\Omega}^c)$ and $y\in\phi^f_{-t}(\Omega)$, which violates the one-to-one property of $\phi^f_{-t}(x)$ since $\overline{\Omega}^c\cap\Omega=\emptyset$.
		%		In this case, since $t>0$, $x_0$ is impossible located outside of $\phi^f_{-t}(\Omega)$ due to the positive invariance of $\phi^f_{-t}(\Omega)$ according to Lemma \ref{lem:expansion.1}. Hence we only need to consider $x_0 \in \phi^f_{-t}(\Omega) \backslash \overline \Omega$. 
		%		
		%		Along the forward time direction, there is $x_0=\phi^f_{t}(x'_0)$, where $x'_0\in \partial \phi^f_{-t}(\Omega)$ and $x_0 \in \phi^f_{-t}(\Omega) \backslash \overline \Omega$. Note that 
		%		\bq \label{eq:boundary.3}
		%		\phi^f_t \circ \phi^f_{-t}(\Omega)=\Omega \subset \overline \Omega
		%		\end{eqnarray}   
		%		Eq.\eqref{eq:boundary.3} indicates that any trajectory starting from point $x\in \phi^f_{-t}(\Omega)$ enters $\overline \Omega$ within the given time $t>0$. 
		%		
		%		Noticing that $\phi^f_{-t}(\Omega) \backslash \overline \Omega$ is an open set, we can always choose a small enough  time $\tau>0$ such that $\phi^f_{\tau}(x_0)=x_1\in \partial \Omega$ (see Fig. \ref{fig.1}b). Let $x'_1=\phi^f_{-t}(x_1)$, then we have 
		%		\begin{eqnarray*}
		%		x'_1=\phi^f_{-t}(x_1)=\phi^f_{\tau}\circ \phi^f_{-\tau} \circ \phi^f_{-t}(x_1)=\phi^f_{\tau}(x'_0)
		%		\end{eqnarray*}
		%		
		%		It indicates that any point on the trajectory between $x'_0$ and $x'_1$, denoted by $x'_2$, cannot enter the set $\overline \Omega$ within the given time $t$ under the flow mapping. This result is in contradiction with Eq.\eqref{eq:boundary.3}. Therefore it is impossible that $x_0 \in \phi^f_{-t}(\Omega) \backslash \overline \Omega$.   
		
		Hence, we conclude that $\partial (\phi _{-t}^{f} (\Omega )) \subseteq \phi _{-t}^{f} (\partial \Omega )$. 
		
		Analogously, we can further prove $\partial (\phi _{-t}^{f} (\Omega ))\supseteq \phi _{-t}^{f} (\partial \Omega )$. Then  $\partial (\phi _{-t}^{f} (\Omega ))=\phi _{-t}^{f} (\partial \Omega )$ is true and 1) is proved. 
		
		$\Rightarrow 2)$. We prove $\forall x\in\partial A(x_s)$ and $\forall \varepsilon>0$, there exists $\tau>0$ such that $\forall t>\tau$, the distance $d(x,\phi _{-t}^{f} (\partial \Omega ))<\varepsilon$.
		
		Note that $ \forall x\in\partial A(x_s), \forall\varepsilon>0$, $\exists x_0\in A(x_s)$ such that $d(x,x_0)<\varepsilon$. Lemma \ref{lem:expansion.1} implies that $\exists\tau>0$ such that $x_0\in\phi _{-t}^{f}(\Omega)$ for all $t>\tau$. By 1) we have $d(x,\phi _{-t}^{f}(\partial\Omega))=d(x,\partial\phi _{-t}^{f}(\Omega))$. So $d(x,\phi _{-t}^{f} (\partial \Omega ))=d(x,\partial\phi _{-t}^{f} (\Omega ))\leq d(x,x_0)<\varepsilon$, which completes the proof.
	\end{proof}
	\subsection{Proof of Theorem \ref{thm:closestUEP}}
	\begin{proof}
		$\Rightarrow 1)$: Since $x^{cl}_u$ is an equilibrium point, the trajectory starting from it is just itself, i.e., $\phi^f_t(x^{cl}_u)=x^{cl}_u$ for any $t\in \rr$. Hence the point $x^{cl}_u$ is invariant under the flow mapping. Then the conclusion is trivial. 
		
		$\Rightarrow$ 2) and 3): Since the level value is chosen as the value of the closest UEP, namely $V(x^{cl}_{u})$, $\partial S_{l}$ is the level set passing through the closest UEP. It holds that $S_{l} \subset A(x_s)$ and $\partial S_{l} \cap \partial A(x_s)= x^{cl}_{u}$. Therefore, Lemma \ref{lem:expansion.1} and Lemma \ref{le:boundary.2} still can apply, directly resulting in both statements 2) and 3). 
	\end{proof}
	\subsection{Proof of Lemma \ref{thm:UEP}}
	\begin{proof}
		Since both $S_{l}$ and $A(x_{s} )$ are positively invariant sets \eqref{eq:system.1}, one can easily show that $D_{l}=S_{l}\cap A(x_s)$ is also a positively invariant set. For any $x\in\ D_l$, it holds that $\phi_t^f(x)\in D_l$ for all $t\in(0,+\infty)$ and $\lim_{t\to +\infty}\phi_t^f(x)=x_s\in D_l$, which implies $x$ is path-connected to $x_s$. Thus, $D_l$ is a connected set.
		Furthermore, by the definition of $D_{l}$, it holds that $x_s\in\text{Int}(D_l)\subseteq D_{l} \subset A(x_s)$. Hence, invoking Lemma \ref{lem:expansion.1}, statements 1), 3) and 4) can be proved following the same arguments as in the proof of Theorem \ref{thm:closestUEP}. 
		
		Finally, statement 2) directly follows from the fact that $\phi^f_{-t}(A(x_s))=A(x_s)$ for all $t\in(0,+\infty)$, which completes the proof. 
	\end{proof}
	
	\subsection{Proof of Theorem \ref{thm:boundary.3}}
	\begin{proof}
		Define the subset $D_{l}:=S_{l} \cap A(x_{s})$. Lemma \ref{thm:UEP} indicates that 
		\begin{eqnarray}
		\label{eq:b.1} 	
		\lim_{t\to +\infty }d(x,\phi _{-t}^{f} (\partial D_{l}))=0, \forall x\in\partial A(x_{s})
		\end{eqnarray}	
		
		Scenarios are trivial when $x^b_u$ has the lowest value, i.e. $x^b_u$ is the closest UEP, or highest value on the boundary. The first scenario is discussed in Theorem \ref{thm:closestUEP}. The second scenario yields $D_l=A(x_s)$ and $\partial S_{l} \cap \overline{A(x_{s})}=\partial A(x_{s} )\cap S_{l}^{c}=x^b_u$, making the statement trivial.
		
		Now we consider nontrivial scenarios when $x^b_u$ has neither the lowest nor highest value on the boundary. We prove in three steps.
		
		\textit{Step 1: boundary partition.}
		
		Recalling Theorem \ref{thm:boundary.1}, it holds that
		\begin{eqnarray}
		\label{eq:b.2}
		\partial A(x_s)= \bigcup _{x^i_{u} \in E\cap \partial A(x_{s} )} \mW^{s} (x^i_{u} ). 
		\end{eqnarray}
		
		Depending on whether they locate within $S_l$ or not, we partition the UEPs on the stability boundary, $x^i_{u} \in E\cap \partial A(x_{s} )$, into two sets 
		\begin{eqnarray*}
			X_{1} &:=&\{ x^i_{u} \in E\cap \partial A(x_{s} )|\;x^i_{u} \in \partial A(x_{s} )\cap S_{l} \}  \\
			X_{2} &:=&\{ x^i_{u} \in E\cap \partial A(x_{s} )|\;x^i_{u} \in \partial A(x_{s} )\cap S_{l}^{c} \}.  
		\end{eqnarray*}
		Clearly, $X_1$ and $X_2$ are both nonempty for nontrivial $x^b_u$. %empty if and only if $\partial A(x_{s} )\cap S_{l}$ ($\partial A(x_{s} )\cap S_{l}^{c}$) is empty. When $X_1=\emptyset$, we have $x_u^b$ is the closest UEP and $S_l\subset A(x_s)$ which is explained in Theorem 6. Clearly we have $X_2\neq\emptyset$ since at least $x_u^b\in X_2$. %However, we may have $\bigcup _{x^i_{u} \in \partial A(x_{s} )\cap S_{l}^{c}} \mW^{s} (x^i_{u})=\emptyset$ when $x_u^b$ has the largest energy value on the boundary. \peng{should we exclude these tow cases in the description of the theorem?}
		
		Correspondingly, the stability boundary, $\partial A(x_{s} )$, can also be divided into two sets:
		\begin{eqnarray*}
			\partial A_{1} :=\bigcup _{x^i_{u} \in X_{1} } \mW^{s} (x^i_{u} ); \quad \quad
			\partial A_{2} :=  \bigcup_{x^i_{u} \in X_{2} } \mW^{s} (x^i_{u} ), 
		\end{eqnarray*}	
		satisfying 
		\begin{eqnarray}
		\label{eq:b.3}
		\left\{
		\begin{array} {lll}
		\partial A_1\cup \partial A_2 &=& \partial A(x_s)\\ 
		\partial A_1\cap \partial A_2&=&\emptyset.
		\end{array} \right.
		\end{eqnarray}	
		Moreover, both $ \partial A_1$ and $\partial A_2$ are invariant sets of \eqref{eq:system.1}.

		Note that the boundary of $D_{l}$ is composed of the boundary of $S_{l}$ inside $\overline {A(x_s)}$ and the boundary of $A(x_s)$ inside $S_{l}$. Therefore, we can also divide $\partial D_{l} $ into two parts, which reads
		\begin{eqnarray*}
			\partial D_{1} := \partial A(x_s) \cap S_{l};\quad \quad  \partial D_{2} :=\partial S_{l} \cap \overline{A(x_{s} )}. 
		\end{eqnarray*}
		Obviously, it holds that
		\begin{eqnarray*}
			%\label{eq:b.4}
			\left\{
			\begin{array} {lll}
				\partial D_1\cup \partial D_2 &=& \partial D_{l}\\   \partial D_1\cap \partial D_2  &=& \emptyset
			\end{array} \right.
		\end{eqnarray*}
		According to Property \ref{Prop:1}, the flow mapping $\phi^f_t$ of \eqref{eq:system.1} is diffeomorphic and hence nonsigular for any $t\in \rr$, yielding 
		\begin{eqnarray}
		\label{eq:b.4}
		\left\{
		\begin{array} {lll}
		\phi^f_{-t}(\partial D_1)\cup \phi^f_{-t}(\partial D_2) &=& \phi^f_{-t}(\partial D_{l})\\   
		\phi^f_{-t}(\partial D_1) \cap \phi^f_{-t}(\partial D_2)  &=& \emptyset.
		\end{array} \right.
		\end{eqnarray}
		%Furthermore, in terms of definitions of stable and unstable manifolds, the following equation holds
		\textit{Step 2:
			We claim the following equation holds:
			\begin{eqnarray}
			\label{eq:b.5}
			\lim_{t\to +\infty} \phi^f_{-t}(\partial D_1) = \partial A_1.
			\end{eqnarray} } 
		For all $t\in\rr$, we have
		$$\phi^f_{-t}(\partial D_1)=\phi^f_{-t}(\partial A(x_s) \cap S_l)=\phi^f_{-t}(\partial A(x_s))\cap\phi^f_{-t}(S_l).$$
		Invoking $\phi^f_{-t}(\partial A(x_s))=\partial A(x_s),\;\forall t\in\rr$, we obtain
		$$\phi^f_{-t}(\partial D_1)=\partial A(x_s)\cap\phi^f_{-t}(S_l).$$
		Since $\phi^f_{-t_1}(S_l)\subset\phi^f_{-t_2}(S_l),\;\forall t_1<t_2$, the sequence of sets $\{\phi^f_{-t}(\partial D_1)\}_t$ is monotonically increasing as $t\to+\infty$. 	
		Thus the limit in \eqref{eq:b.5} exists and we have
		$$\lim_{t\to +\infty} \phi^f_{-t}(\partial D_1) =\bigcup_{t>0}\phi _{-t}^{f} (\partial D_1).$$
		By the invariance property of $\partial A_1$, we have 
		$$\lim_{t\to +\infty} \phi^f_{-t}(\partial D_1)\subseteq\partial A_1.$$
		Furthermore, $\forall x\in\partial A_1$ there is a $x_u^i\in X_1$ such that $\lim_{t\to +\infty} \phi^f_{t}(x)=x_u^i$ and $\phi^f_{t}(x)\in\partial A_1, \forall t\in\rr$. Thus $\exists \tau>0$ such that $\phi^f_{\tau}(x)\in\partial D_1$ which means $x\in\phi^f_{-\tau}(\partial D_1)$. By the arbitrary of $x$, we have $\partial A_1\subseteq\lim_{t\to +\infty} \phi^f_{-t}(\partial D_1)$. Therefore, the equality in \eqref{eq:b.5} holds.
		
		\textit{Step 3:
			We claim that $\forall x\in\partial A_2$, $\forall \varepsilon>0$, $\exists \tau>0$ such that $d(x,\phi_{-t}^f(\partial D_2))<\varepsilon$, $\forall t>\tau$.}
		
		For all $ x\in\partial A_2\setminus\overline{\partial A_1}$, the distance $d(x,\overline{\partial A_1})$ is strictly bounded away from zero. By \eqref{eq:b.1} and \eqref{eq:b.5} we have
		\begin{equation}\label{eq:b.6}
		\lim_{t\to +\infty}d(x,\phi _{-t}^{f} (\partial D_2))=0.
		\end{equation}
		
		Consider $x\in\overline{\partial A_1}\cap\partial A_2$. Note that $\partial A_2$ is not isolated as $x^b_u$ is not the highest energy point on the boundary. So for any $\varepsilon>0$, there exists a point $x'\in\partial A_2\setminus\overline{\partial A_1}$ such that $d(x,x')<{\varepsilon}/{2}$. Thus, there exists $\tau>0$ such that for all $t>\tau$
		$$d(x',\phi _{-t}^{f} (\partial D_2))<\frac{\varepsilon}{2}.$$
		It follows from the triangle inequality that
		$$d(x,\phi _{-t}^{f} (\partial D_2))\leq d(x,x')+d(x',\phi _{-t}^{f} (\partial D_2))<\varepsilon.$$
		Hence, we have
		$$\lim_{t\to +\infty}d(x,\phi _{-t}^{f} (\partial D_2))=0,\qquad\forall x\in\partial A_2$$
		
		According to the definitions of $\partial D_2$ and $\partial A_2$, it directly results in 
		\begin{eqnarray*}
			\mathop{\lim}\limits_{t\to +\infty }d(x, \phi _{-t}^{f} (\partial S_{l} \cap \overline{A(x_{s})}))=0, \forall x\in\bigcup _{x^i_{u} \in \partial A(x_{s} )\cap S_{l}^{c}} \mW^{s} (x^i_{u}) 
		\end{eqnarray*}
		
		The proof is completed.                                              
	\end{proof}
	%	All the EPs in this area are listed in Tab. \ref{tab:eps}, including one SEP and twelve UEPs on the stability boundary. 
	%	
	%	\begin{table}[htb]
	%		\renewcommand{\arraystretch}{1.3}
	%		\centering
	%		\caption{\\ \textsc { Equilibrium Points of the System}}
	%		\label{tab:eps}
	%		\begin{tabular}{c c c}
	%			\hline
	%			\textbf{Equilibrium Point} & \textbf{Type} & \textbf{Energy Value} \\ \hline 
	%			(0.02801,0.06403) & 0(SEP) & 0 \\ \hline 
	%			(0.4667,3.1149) & 1 & 3.7770 \\ \hline 
	%			(-3.037,0.3341) & 1 & 6.0490 \\ \hline 
	%			(3.2458,0.3341) & 1 & 5.9234 \\ \hline 
	%			(3.0407,3.2232) & 1 & 5.6235 \\ \hline 
	%			(-3.2425,-3.060) & 1 & 6.3775 \\ \hline 
	%			(0.0467,-3.1683) & 1 & 4.3186 \\ \hline 
	%			(2.608,4.225) & 2 & 5.7654 \\ \hline 
	%			(3.597,1.575) & 2 & 6.0105 \\ \hline 
	%			(2.608,-2.028) & 2 & 6.3939 \\ \hline 
	%			(-2.686,-4.708) & 2 & 6.7645 \\ \hline 
	%			(-3.675,-2.028) & 2 & 6.5196 \\ \hline 
	%			(-2.686,1.575) & 2 & 6.1362 \\ \hline 
	%		\end{tabular}
	%	\end{table}
	%	
	%	Type 1 UEPs and type 2 UEPs in Tab. \ref{tab:eps}, along with the exact stability boundary, are depicted in Fig. \ref{fig:EPs}.
	%	
	%	\begin{figure}[htp]
	%		\centering
	%		\includegraphics[width=0.8\columnwidth]{image_EPs}
	%		\caption{The EPs and exact stability boundary of the system.}
	%		\label{fig:EPs}
	%	\end{figure}
	%	

\end{document}